\documentclass[onecolumn,superscriptaddress,longbibliography]{revtex4-2}
\usepackage[ruled,vlined]{algorithm2e}

\usepackage{complexity}
\usepackage[margin=1in]{geometry}
\usepackage[utf8]{inputenc}
\usepackage{bbold}
\usepackage[normalem]{ulem}
\usepackage[dvipsnames]{xcolor}

\usepackage{graphicx}
\usepackage{multirow}
\usepackage{amsmath,amsthm}
\usepackage{mathtools}
\usepackage{amssymb,amsfonts}
\usepackage{soul}
\usepackage{verbatim}
\usepackage{bm}
\usepackage[colorlinks=true, linkcolor=blue, citecolor=blue, urlcolor=blue]{hyperref}
\usepackage{tikz}
\usepackage{xspace}
\usepackage{import}
\usepackage[caption=false]{subfig}
\usepackage{bbold}
\usepackage{qcircuit}
\usepackage{chemformula}

\usepackage{dsfont}

\usepackage{adjustbox}

\newtheorem{thm}{Theorem}
\newtheorem{lemma}[thm]{Lemma}
\newtheorem{corollary}[thm]{Corollary}
\newtheorem{defn}[thm]{Definition}

\newtheorem{prop}[thm]{Proposition}

\newcommand{\ket}[1]{|#1\rangle}
\newcommand{\bra}[1]{\langle#1|}
\newcommand{\braket}[2]{\langle #1|#2\rangle}
\newcommand{\ketbra}[2]{| #1\rangle\!\langle #2 |}

\newcommand\norm[1]{\left\lVert#1\right\rVert}
\newcommand{\lb}{\left(}
\newcommand{\rb}{\right)}

\newcommand{\UofT}{\affiliation{
Department of Computer Science, University of Toronto, Canada}}

\newcommand{\UofTP}{\affiliation{
Department of Physics, University of Toronto, Canada}}

\newcommand{\PNNL}{\affiliation{Pacific Northwest National Laboratory, Richland WA, USA}}

\newcommand{\CIFAR}{\affiliation{Canadian Institute for Advanced Studies, Toronto, Canada}}

\newcommand{\BIQuantum}{\affiliation{
Quantum Lab, Boehringer Ingelheim, 55218 Ingelheim am Rhein, Germany}}
\begin{document}

\title{Dividing and Conquering the Van Vleck Catastrophe}

\author{Sophia Simon}
\email{sophia.simon@mail.utoronto.ca}
\UofTP

\author{Gian-Luca R. Anselmetti}
\BIQuantum

\author{Raffaele Santagati}
\email{raffaele.santagati@boehringer-ingelheim.com}
\BIQuantum

\author{Matthias Degroote}
\BIQuantum

\author{Nikolaj Moll}
\BIQuantum

\author{Michael Streif}
\BIQuantum

\author{Nathan Wiebe}
\email{nathan.wiebe@utoronto.ca}
\UofT
\PNNL
\CIFAR

\begin{abstract}
    The quantum-computational cost of determining ground state energies through quantum phase estimation depends on the overlap between an easily preparable initial state and the targeted ground state. The Van Vleck orthogonality catastrophe has frequently been invoked to suggest that quantum computers may not be able to efficiently prepare ground states of large systems because the overlap with the initial state tends to decrease exponentially with the system size, even for non-interacting systems. We show that this intuition is not necessarily true. Specifically, we introduce a divide-and-conquer strategy that repeatedly uses phase estimation to merge ground states of increasingly larger subsystems. We provide rigorous bounds for this approach and show that if the minimum success probability of each merge is lower bounded by a constant, then the query complexity of preparing the ground state of $N$ interacting systems is in $O(N^{\log\log(N)} {\rm poly}(N))$, which is quasi-polynomial in $N$, in contrast to the exponential scaling anticipated by the Van Vleck catastrophe. We also discuss sufficient conditions on the Hamiltonian that ensure a quasi-polynomial running time.
\end{abstract}

\maketitle


\section{Introduction}

The simulation of quantum systems is one of the major drivers behind the development of quantum computers. A fundamental application of such quantum simulations is the preparation of a Hamiltonian's ground state and the estimation of its energy~\cite{Cao2019Rev,McArdle2020,Aspuru-Guzik2005,Bauer2020}. This task is not only crucial for studying complex systems in, e.g., physics and chemistry, but also finds critical applications in industries, for instance, to understand the process of nitrogen fixation~\cite{reiher2017elucidating}, study the metabolism of drugs~\cite{goings2022reliably, caesura2025faster, Santagati2024}, or to develop new battery cathodes~\cite{rubin2023fault, kim2022fault}.

The leading algorithm for determining ground state energies of Hamiltonians on future fault-tolerant quantum computers is Quantum Phase Estimation (QPE). Given a Hamiltonian $H$ and some initial state $\ket{\psi_i}$, QPE allows us to estimate the eigenvalues of $H$ by using the phase information collected from simulating under the action of the Hamiltonian. As a projector method, the probability of measuring one of the eigenvalues is directly linked to the amplitude of the corresponding eigenstate present in the initial state. 
Thus, the computational cost of estimating the ground state energy via QPE is determined by two factors: (i) the cost per measurement, which is dominated be the number of quantum gates required to simulate time evolution under the Hamiltonian, and (ii) the number of measurements, which depends on the overlap between the initial state and the ground state whose energy is being estimated. In the most basic approach, the ground state can be identified by statistically sampling the lowest energy states, provided there is a non-negligible eigenvalue gap $E_1 - E_0 \geq \gamma$ between the ground state energy $E_0$ and the first excited energy $E_1$, and the observed energy falls below this gap. Although more advanced techniques for ground state preparation have emerged~\cite{Ge2019, Lin2020}, all these methods rely on the fact that an initial state shares a non-negligible overlap with the Hamiltonian’s ground state. Some recent proposals utilize ideas from dissipative dynamics and thermal state preparation to prepare ground states without requiring a good initial state~\cite{cubitt2023dissipative, motlagh2024ground, Ding2024_Lindbladian, hagan2025thermodynamic, zhan2025rapidquantumgroundstate}. However, the time complexity of these algorithms tends to depend on other quantities, such as the mixing time of a carefully chosen Lindbladian, which are generally also difficult to bound. Understanding the trade-offs between the different methods for practical applications remains an important open problem.

First resource estimates for determining the ground state energy of various industrially relevant systems \cite{lee2021even, goings2022reliably, von2021quantum, caesura2025faster, low2025fast} with quantum phase estimation (QPE) assumed the availability of an initial state that has a substantial overlap with the ground state. Nonetheless, recent studies indicate that for complex systems, such as the FeMoco, the overlap between commonly used initial states and the desired ground state can vanish exponentially with the system size~\cite{Lee2023}, leading to an exponential increase in the computational time of QPE. In response to this challenge, researchers have developed methods to enhance the initial state overlap, such as leveraging the gauge freedom of basis rotations~\cite{ollitrault2024enhancing} and constructing better initial states via sums of Slater determinants~\cite{Tubman2018} or through tensor networks~\cite{fomichev2024initial, berry2024rapid}.

A major concern that can be levied against these results stems from the question of whether an accurate approximation to the ground state can be prepared in polynomial time on a quantum computer.  Theoretical results show that, in general, the answer to the question is negative as the preparation of the ground state in a fixed basis is~\QMA-hard~\cite{o2022intractability}.  However, it is unclear whether these hard instances correspond to physically realistic molecules as even a quantum computer is unlikely to be able to prepare such states in polynomial time. 
On the other hand, even for physically relevant systems it can be nontrivial to prepare their ground states on a quantum computer. Van Vleck argued that if we focus on the thermodynamic limit of quantum mechanics then any state that has constant overlap with the ground state of a single copy, will have exponentially shrinking overlap with the collective ground state as the number of copies of these states increases~\cite{Vanvleck1936}. 
As the cost of quantum approaches to the electronic structure problem depend inversely on this overlap~\cite{reiher2017elucidating,lee2021even, goings2022reliably, von2021quantum, caesura2025faster, low2025fast}, such an orthogonality catastrophe has been suspected to be a serious obstacle to the development of scalable quantum algorithms for chemistry~\cite{Lee2023}.

In this paper, we address the question of whether it is possible to avoid the Van Vleck catastrophe under specific circumstances. To that end, we propose a divide-and-conquer strategy for constructing an initial state that effectively counters the exponentially decaying overlap and ensures that the overall gate complexity scales only quasi-polynomially with the system size.  
The main idea behind our algorithm is to divide the full system into smaller subsystems and then use phase estimation to project onto the individual ground states via a measure-and-repeat-until-success strategy before merging pairs into larger systems. 
This iterative process is continued until the whole system is rebuilt. Using this procedure, we demonstrate, both analytically and numerically, that it is possible to circumvent the Van Vleck catastrophe in specific instances of weakly interacting subsystems.

The remainder of the paper is structured as follows. We discuss the Van Vleck catastrophe and our divide and conquer approach to state preparation in Section~\ref{sec:divConq}. Then, in Section~\ref{sec:perturbed}, we review results from eigenvalue and eigenvector perturbation theory which allow us to understand how the success probability scales with the strength of the interaction and the gap in the eigenvalues and use these results to provide sufficient conditions for sub-exponential scaling for ground state preparation. Next, we discuss the impact of entanglement area laws and volume laws on the ground state and show that a volume law of entanglement immediately implies that the system is too strongly correlated for the divide and conquer approach to succeed with high-probability using sub-exponential resources. We then provide numerical experiments in Section~\ref{sec:numerics} which show that our sufficient conditions for sub-exponential scaling may actually be rather pessimistic for one-dimensional systems and that polynomial scaling may be more widely achievable than our analytical results suggest.  In Section~\ref{sec:fermion} we clarify the issues of separability that fermionic statistics may seem to create for our technique and show that the non-locality that appears in anti-symmetrizing states or operators does not impede the divide and conquer approach.  Finally, we conclude in Section~\ref{sec:conclusion} where we also discuss future directions.

\section{Combating the Van Vleck Catastrophe}
\label{sec:divConq}

Before beginning with our discussion of how to combat the Van Vleck catastrophe, we will formally discuss the origins of it and provide a short proof of why it emerges.  

\begin{lemma}[Van Vleck Catastrophe, adapted from~\cite{Vanvleck1936, Kohn1999}]
\label{lem:vanvleck}
    Let $\ket{\psi_{*}}$ and $\ket{\psi_{i}}$ be quantum states such that for some $\delta\in (0,1)$,
    $$
        |\braket{\psi_{*}}{\psi_{i}}|\le 1-\delta.
    $$
    We then have that for any integer $N$, the overlap between the tensor product of $N$ copies of $\ket{\psi_{i}}$ and $\ket{\psi_{*}}$ is upper bounded as follows:
    $$
        |\braket{\psi_{*}^{\otimes N}}{\psi_{i}^{\otimes N}}|\le (1-\delta)^{N}.
    $$
\end{lemma}

\begin{proof}
    The proof follows straightforwardly by induction.  The base case in the inductive proof is assumed to be true already.  This means that we only need to demonstrate the inductive step.  Assume that the hypothesis is true for $N=r$ for some $r\ge 1$. Then we have that
    \begin{equation}
        |\braket{\psi_{*}^{\otimes r+1}}{\psi_{i}^{\otimes r+1}}| = |(\bra{\psi_{*}^{\otimes r}}\otimes \bra{\psi_{*}}) (\ket{\psi_{i}^{\otimes r}}\otimes \ket{\psi_{i}}) |= |\braket{\psi_{*}^{\otimes r}}{\psi_{i}^{\otimes r}}||\braket{\psi_{*}}{\psi_{i}}|\le (1-\delta)^{r+1},
    \end{equation}
    as desired. Thus, if $\delta > 0$ then the overlap shrinks exponentially with $N$. 
\end{proof}

As shown in Lemma~\ref{lem:vanvleck}, taking tensor products of quantum states with non-unit overlap inevitably leads to an exponential decrease in the overall overlap. In general, even an initial state with good overlap for the individual subsystems will fail to produce the overall ground state with high probability when increasing the size of the system. This is true even in the case where there are no interactions between the subsystems. This in particular should raise a question in the reader's mind: if this exponential decay of overlap holds even for non-interacting states, then could we avoid this problem using a more clever method of preparing a state?  In fact, the answer is yes.

To see why this is true, let us consider a related problem of flipping coins.  Let us assume that we wish to prepare $100$ coins all in the state ``heads''.  One algorithm to do this would be to repeatedly flip all of the coins at once until we succeed.  In that case, in complete accordance with Lemma~\ref{lem:vanvleck}, the expected number of coin flips  needed for this process is $100 \times 2^{100}$.  However, if we flip the first coin until we get a ``heads'' outcome, then proceed to the second coin and so on, then the expected number of coin flips needed to prepare the state is only $200$.  This intuition holds true in the non-interacting case as well: we can divide our state preparation problem up into a large number of uncorrelated state preparations and repeat each until we get a success.  The central point of our paper is to take this insight from the non-interacting case to show that in general we can prepare weakly interacting ground states in sub-exponential time.

Our central tactic for defeating the Van Vleck catastrophe is to use a divide-and-conquer strategy to break the problem of preparing the ground state up into the problem of merging together groups of approximate ground states of smaller subsystems.  The idea is to use phase estimation to fuse together these smaller systems recursively, and when a failure to project is detected, we discard the quantum state that we have prepared and generate a new one.  In this manner, we break apart a problem of merging together $2^p$ interacting systems to the problem of merging $p$ chains of approximate ground states.  We then show that, for appropriately weakly interacting systems with significant eigenvalue gaps, the ground state can be prepared in polynomial time despite the appearance of the Van Vleck catastrophe. We will then show numerically that while sufficient, this criterion is not necessary. In particular, we provide an example where the success probability is very large despite the spectral gaps not being large compared to the norm of the interaction Hamiltonians.

\subsection{Dividing and Conquering Van Vleck}

We consider a system composed of $N=2^p$ basic subsystems, where interactions between multiple subsystems follow a tree-like structure. For simplicity, we will focus on perfect binary trees, but our results can easily be generalized to arbitrary tree structures. Below, we define our input model in more detail.

\begin{defn}[Input Model]
    Let $N = 2^p$ for some integer $p$ be the number of basic subsystems. Each subsystem may contain a different number of particles (or qubits). The Hamiltonian $H$ of the entire system can be represented as a perfect binary tree of height $p$ such that
    \begin{equation}
        H = \sum_{n=0}^{p} \sum_{s \in \{ 0, 1 \}^n} A_{s},
    \end{equation}
    where $A_s$, with $s \in \{ 0, 1\}^n$ being a binary string of length $ 0 \leq n \leq p$, corresponds to a node at level $n$ of the binary tree. More specifically, if $n = p$, then $A_s$ corresponds to a leaf node of the binary tree and acts nontrivially only on the basic subsystem with label $s$. Else, define $S_s$ to be the set of all $p$-bit subsystem labels whose $n$ most significant bits are equal to $s$. Then $A_s$ encodes all the interactions between the subsystems in $S_s$ which are not already included in its child nodes, $A_{s0}$ and $A_{s1}$, see Fig.~\ref{fig:binary_tree} for a visualization. We denote the term labeled by the empty string, corresponding to $n=0$, by $A_{*}$. Further, we define $H_s$ to be the Hamiltonian associated with the set $S_s$, i.e., $ H_s$ is made up of all terms acting nontrivially solely on subsystems in $S_s$. The full Hamiltonian $H$ corresponds to $s$ being the empty string. Lastly, for each $H_s$, we also define the following unitary operator:
    \begin{equation}
        U_s := e^{ i \pi H_s/2\norm{H_s}}.
    \end{equation}
\label{def:input_ham}
\end{defn}

An example decomposition of a Hamiltonian for a system composed of $N = 8$ subsystems according to Definition~\ref{def:input_ham} is provided in Fig.~\ref{fig:binary_tree}.

\begin{figure}
    \centering
    \includegraphics[width=0.99\linewidth]{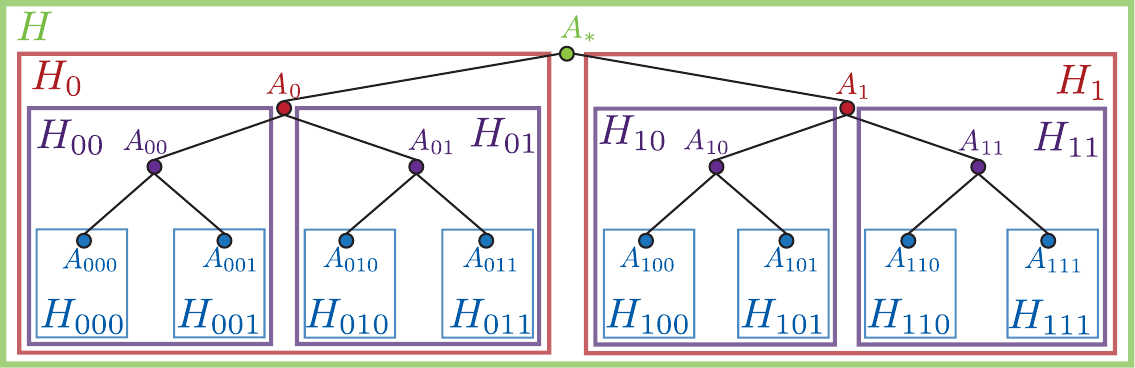}
    \caption{Binary tree decomposition of a Hamiltonian $H$ for a system composed of $8$ basic subsystems. The leaf nodes correspond to the Hamiltonian terms that act solely on the basis subsystems. Multi-system interactions are introduced in the higher levels of the tree. The various subsystem Hamiltonians and the terms that they include are indicated by the colored boxes.}
    \label{fig:binary_tree}
\end{figure}

Let us now discuss how to prepare the ground state of a composite system by recursively preparing the ground states of smaller subsystems and then combining them via phase estimation. The key idea that allows us to avoid an exponentially decreasing success probability, and hence an exponential cost, is to use a measure-until-success strategy at each stage of the recursion. In case of a failure at any stage, we only need to redo the preparation on the affected subsystems.
Now, consider a Hamiltonian $H$ on $N = 2^p$ subsystems according to Definition~\ref{def:input_ham} and let $U_s$ be the unitary time evolution operator associated with the subsystem Hamiltonian $H_s$. Further, let $V_0, V_1, \dots, V_{N-1}$ be oracles that prepare the ground state of subsystem $0, 1, \dots, N-1$, respectively. The aim is to upper bound the overall number of queries to $V_0, V_1, \dots V_{N-1}$ and all $U_s$ required to prepare the ground state of the entire composite system. The following proposition is a first step towards this goal as it will allow us to bound the propagation of errors in the state preparation.

\begin{prop}
    Let $\delta \in [0,1]$ and let $n \geq 1$ be an integer. Then
    \begin{equation}
        \lb 1- \frac{\delta}{n} \rb^n \geq 1 - \delta.
    \end{equation}
\label{prop:probability}
\end{prop}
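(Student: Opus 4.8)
The plan is to recognize this as a direct instance of Bernoulli's inequality, which states that for any real $x \geq -1$ and any integer $n \geq 1$,
\begin{equation}
    (1+x)^n \geq 1 + nx.
\end{equation}
First I would set $x = -\delta/n$. Since $\delta \in [0,1]$ and $n \geq 1$, we have $\delta/n \in [0,1]$, so $x = -\delta/n \in [-1,0]$, which in particular satisfies the hypothesis $x \geq -1$. Substituting into Bernoulli's inequality then gives
\begin{equation}
    \lb 1 - \frac{\delta}{n} \rb^n \geq 1 + n \cdot \lb -\frac{\delta}{n} \rb = 1 - \delta,
\end{equation}
which is exactly the claim.

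Since the paper favors self-contained inductive arguments (as in the proof of Lemma~\ref{lem:vanvleck}), I would alternatively prove the needed instance of Bernoulli's inequality by induction on $n$ rather than cite it. The base case $n=1$ is an equality, $1-\delta \geq 1-\delta$. For the inductive step, assuming $(1+x)^n \geq 1 + nx$ for some fixed $x \in [-1,0]$, I would multiply both sides by the nonnegative factor $(1+x)$ and expand:
\begin{equation}
    (1+x)^{n+1} \geq (1+nx)(1+x) = 1 + (n+1)x + nx^2 \geq 1 + (n+1)x,
\end{equation}
where the last step discards the nonnegative term $nx^2$. Specializing to $x = -\delta/n$ at the end recovers the stated bound.

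There is essentially no conceptual obstacle here; the only point that requires care is verifying the domain condition so that the multiplication preserves the inequality direction. Concretely, one must confirm that the factor $1 + x = 1 - \delta/n$ is nonnegative, which is guaranteed precisely because $\delta/n \leq 1$ under the hypotheses $\delta \leq 1$ and $n \geq 1$; this is what makes $x \geq -1$ the sharp admissible range. I would state this nonnegativity explicitly to justify preserving the inequality when multiplying through, and then the remainder reduces to routine algebra.
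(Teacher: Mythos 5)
Your proof is correct. It takes a mildly different route from the paper's: you identify the statement as the instance $x=-\delta/n$ of Bernoulli's inequality $(1+x)^n\ge 1+nx$ for $x\ge -1$, whereas the paper inducts on $n$ directly on the stated inequality, applying the inductive hypothesis at the rescaled value $\delta'=\tfrac{n}{n+1}\delta$ (so that $\delta'/n=\delta/(n+1)$) and then discarding the nonnegative cross term. Mechanically the two inductions are the same --- multiply by one more factor, expand, drop a nonnegative quadratic/cross term --- but your version carries the free parameter $x$ through the induction and only specializes at the end, which avoids the paper's rescaling trick and makes the inductive hypothesis cleaner to apply. Your emphasis on checking $1+x=1-\delta/n\ge 0$ is exactly the right point of care; note that the paper's proof also implicitly relies on the analogous fact $1-\delta/(n+1)\ge 0$ when it multiplies the inductive hypothesis by that factor. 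Either argument is complete and elementary; citing Bernoulli outright buys brevity, while the self-contained induction matches the paper's style.
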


\begin{proof}
    We prove the above proposition via induction. First, we check that the base case with $n=1$ is true:
    \begin{equation}
        \lb 1- \frac{\delta}{1} \rb^1 \geq 1 - \delta
    \end{equation}
    as desired. For the induction step, we assume that $\lb 1- \frac{\delta}{n} \rb^n \geq 1 - \delta$ is true and aim to prove that it must then also hold for $n+1$, i.e.~we need to show that $\lb 1- \frac{\delta}{n+1} \rb^{n+1} \geq 1 - \delta$. Let $\delta' := \frac{n}{n+1} \delta$. Then
    \begin{equation}
    \begin{split}
         \lb 1- \frac{\delta}{n+1} \rb^{n+1} &= \lb 1- \frac{\delta}{n+1} \rb \lb 1- \frac{\delta}{n+1} \rb^{n} = \lb 1- \frac{\delta}{n+1} \rb \lb 1- \frac{\delta'}{n} \rb^{n} \\
         &\geq  \lb 1- \frac{\delta}{n+1} \rb \lb 1- \delta' \rb \\
         &\geq 1 - \frac{\delta}{n+1} - \frac{n \delta}{n+1} \\
         &\geq 1 - \delta.
    \end{split}
    \end{equation}
\end{proof}

Now we are ready to state the main theorem which provides an upper bound on the complexity of preparing the ground state of a set of interacting systems via a divide-and-conquer strategy. We will see that the gate complexity need not scale exponentially with the number of systems in contrast to what a naive application of the Van Vleck catastrophe argument would suggest. In the following, unless stated otherwise, we will use $\norm{M}$ to refer to the induced 2-norm (i.e.~spectral norm) of a matrix $M$.

\begin{thm}[Divide-and-Conquer Ground State Preparation]
    Let $H$ be a Hamiltonian on $N = 2^p$ subsystems according to Definition~\ref{def:input_ham} and let $\ket{\psi_s}$ be the ground state of $H_s$ for all $s \in \left\{ \{ 0, 1\}^j | j \in \mathbb{Z}_p \right\}$ subject to the promise that  $\forall  s$ $\|H_s\| \le H_{\max}$. Further, let $\gamma_s$ be the spectral gap of $H_s$ with $\gamma_s \geq \gamma_{\min}$ for all $s$ and let $U_s = \exp \lb i \pi  H_s/2\norm{H_s} \rb$ as in Definition~\ref{def:input_ham}.  
    Assume that we have access to a set of oracles $V = \left[V_0, V_1, \dots, V_{N-1}\right]$, which prepare the ground state of subsystem $0, 1, \dots, N-1$, respectively. 
    Then there exists a quantum algorithm that can prepare the ground state $\ket{\psi_*}$ of the entire system with success probability at least $1 - \delta$ using
    a total number of queries to all the $V$ oracles that obeys
    \begin{equation}
        N_V \in O \lb N^{1 + \log (\pi^2/4r^2) + \log \log \lb 1/\delta \rb + \log \log \lb N^2 \rb} \rb,
    \end{equation}
    and a total number of applications of all $U_s$ that scale as
    \begin{equation}
        N_{U_s} \in O \lb \frac{H_{\max}}{\gamma_{\min}} N^{1 + \log (\pi^2/4r^2) + \log \log \lb 1/\delta \rb + \log \log \lb N^2 \rb} \rb,
    \end{equation}
    where $r \in [0,1]$ is a lower bound on the overlap between $\ket{\psi_{s0}}\ket{\psi_{s1}}$ and $\ket{\psi_{s}}$ $\forall s$. 
\label{thm:divide+conquer}
\end{thm}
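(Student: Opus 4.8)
The plan is to analyze the recursive merge procedure directly: to prepare $\ket{\psi_s}$ at an internal node $s$ of level $n$, recursively prepare the two child ground states $\ket{\psi_{s0}},\ket{\psi_{s1}}$, form their tensor product, and run phase estimation with $U_s = e^{i\pi H_s/2\norm{H_s}}$ to project onto the ground-energy subspace of $H_s$; if the measured energy does not fall below the gap threshold we declare failure, discard the register, and retry from the children (measure-until-success), while leaves are prepared by a single call to the appropriate $V_i$. First I would establish the per-merge success probability. Because the input overlaps the target ground state by at least $r$, and because standard phase estimation returns the correct rounded eigenphase with probability at least $4/\pi^2$ conditioned on the input lying in that eigenstate, each attempt succeeds with probability at least $(4/\pi^2)r^2$; hence a geometric number of attempts with mean $\le \pi^2/(4r^2)$ suffices, which is the origin of the $\log(\pi^2/4r^2)$ term. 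To actually separate the ground state from the first excited state each attempt must resolve the gap $\gamma_s \ge \gamma_{\min}$; since one eigenphase of $U_s$ equals $\pi E/2\norm{H_s}$, resolving $\gamma_s$ costs $O(\norm{H_s}/\gamma_s)=O(H_{\max}/\gamma_{\min})$ controlled applications of $U_s$ per attempt, which is exactly the extra prefactor distinguishing $N_{U_s}$ from $N_V$.

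Next I would fix a uniform cap of $m$ attempts per node and set $m$ from a global failure budget. Viewing the computation as a perfect binary tree of depth $p=\log_2 N$, I would allocate the total failure probability $\delta$ across the $p$ levels using Proposition~\ref{prop:probability} (each level may fail with probability at most $\delta/p$, and $(1-\delta/p)^p \ge 1-\delta$), then union bound over the at most $N$ nodes within a level, giving a per-node budget of order $\delta/(pN)$. Requiring $(1-(4/\pi^2)r^2)^m \le \delta/(pN)$ forces $m$ to be of order $\tfrac{\pi^2}{4r^2}\big(\log(N^2)+\log(1/\delta)\big)$, where I have absorbed $\log p \le \log N$ into $\log(N^2)=2\log N$. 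Finally I would solve the cost recursion: each attempt at a level-$n$ node re-prepares both level-$(n{+}1)$ children, so the worst-case number of leaf-oracle calls satisfies $L_n = 2m\,L_{n+1}$ with $L_p=1$, giving $N_V = (2m)^p = N^{1+\log_2 m}$. Substituting $m$ and using $\log(a+b)\le 1+\log a+\log b$ converts the additive $\log(N^2)+\log(1/\delta)$ inside $m$ into the additive $\log\log(N^2)+\log\log(1/\delta)$ in the exponent, yielding the claimed $N_V$; counting the $O((2m)^p)$ phase-estimation runs, each of cost $O(H_{\max}/\gamma_{\min})$ in $U_s$, yields $N_{U_s}$.

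The step I expect to be the main obstacle is controlling error propagation, since the recursion feeds \emph{approximate} ground states into each merge rather than the exact $\ket{\psi_s}$ for which $r$ is defined. Phase estimation projects only approximately, so a ``successful'' child equals the true ground state only up to an error governed by the finite gap resolution, and these errors accumulate across the $\log_2 N$ levels while simultaneously threatening to pull the working overlap below $r$ and thereby degrade the success probability. The delicate point is to choose the phase-estimation precision fine enough that each merge preserves fidelity to within its allotted share of the budget --- again combined through Proposition~\ref{prop:probability}, so that the product of per-step fidelities stays above $1-\delta$ --- without inflating the $U_s$ cost beyond $O(H_{\max}/\gamma_{\min})$ per attempt. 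Making the success-probability bound and the fidelity bound hold simultaneously under a single choice of $m$ and precision is where the real work of the proof lies.
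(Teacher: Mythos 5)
Your proposal matches the paper's proof in all essentials: recursive merging up the binary tree with a measure-until-success loop at each node, per-attempt success probability $\xi \ge 4r^2/\pi^2$ from standard phase estimation, $O(H_{\max}/\gamma_{\min})$ applications of $U_s$ per attempt to resolve the gap, and the cost recursion $(2k)^p = N^{1+\log k}$ with the failure budget distributed over the $p$ levels via Proposition~\ref{prop:probability} --- the only cosmetic difference being that the paper shrinks the tolerance geometrically ($\delta \to \delta/3$ at each level of the induction) rather than union-bounding with a $\delta/(pN)$ per-node budget. The ``main obstacle'' you flag at the end is not actually confronted in the paper's proof of this theorem either: it idealizes a successful phase-estimation outcome as an exact projection onto $\ket{\psi_s}$, so conditioned on success no infidelity accumulates across levels, and the effect of imperfectly implemented unitaries is deferred to the effective-Hamiltonian analysis of Section~\ref{sec:perturbed} (Lemma~\ref{lem:effective_ham} and Proposition~\ref{prop:approximate_overlap}).
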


\begin{proof}
    Our general strategy is to prepare the overall ground state via the binary tree structure apparent in Hamiltonians according to Definition~\ref{def:input_ham}. Each leaf node represents one of the $N = 2^p$ subsystems. We will refer to the leaf nodes as layer $p$. The next layer, which we will call layer $p-1$, contains $2^{p-1} = N/2$ nodes. Each of those represents a composite system made up of two individual subsystems. The last layer, layer $0$, has only a single node, the root node, and represents the entire system. Fig.~\ref{fig:binary_tree} visualizes the binary tree structure for $p=3$.

    We prove the theorem inductively by showing that the ground state $\ket{\psi_{s}}$ of a collection of subsystems labeled by a bit string $s$ of length $j$  with $j \in \left\{ 0, 1, \dots, p \right\}$ (e.g. $s \in \{ 00, 01,10,11 \}$ for $j=2$) can be prepared with probability at least $1-\delta$ using
    \begin{equation}
        N_V(j, \delta) \le C_1 2^{(p-j)\left( 1 
        + \log (\pi^2/4r^2) + \log \log \lb 1/\delta \rb + \log (2(p-j)) \right)} 
    \end{equation}
    queries to all oracles $V_0, \dots, V_{N-1}$ and
    \begin{equation}
        N_{U_s}(j, \delta) \le \frac{C_2 H_{\max}}{\gamma_{\min}} 2^{(p-j) \left( 1 + \log (\pi^2/4r^2) + \log \log \lb 1/\delta \rb + \log (2(p-j))\right)}
    \end{equation}
    queries to all unitaries $U_s$ with $s \in \left\{ \{ 0, 1\}^{m} | m \in \left\{ j, j+1, \dots, p \right\} \right\}$. Here, $C_1 \geq 1$ and $C_2 \geq 0$ are universal constants. The above inequalities form our induction hypothesis. Note that a string of length $j$ corresponds to a node in layer $j$ of the binary tree structure representing a collection of $2^{p-j}$ subsets.
    Let us first discuss the base case $j=p$, i.e.~consider one of the leaf nodes. In that case, we only need $1 \leq C_1$ query to one of the $V$  oracles in order to prepare the desired ground state. As no phase estimation is required, the number of queries to any $U_s$ is 0. Hence, the induction hypothesis holds. 
    
    Next, let us take a look at the induction step. We assume that the induction hypothesis holds for $j = n$ with $n \in \{1, 2, \dots, p \}$ and aim to prove that it must then also hold for $j = n - 1$.
    Let $\ket{\psi_{s'0}}$ and $\ket{\psi_{s'1}}$ be the ground states associated with the child nodes of the node labeled by the bit string $s'$ of length $n-1$ and let $\ket{\psi_{s'}}$ be the ground state associated with the node labeled $s'$. By assumption,
    \begin{equation}
        \left| \bra{\psi_{s'}}(\ket{\psi_{s'0}}\ket{\psi_{s'1}}) \right| \geq r.
    \end{equation}
    The idea now is to use phase estimation to prepare $\ket{\psi_{s'}}$ from the initial state $\ket{\psi_{s'0}}\ket{\psi_{s'1}}$. Specifically, recall that $U_{s'} = e^{i \pi H_{s'}/2\norm{H_{s'}}}$ and $\gamma_{s'}$ is the spectral gap of $H_{s'}$. In order to distinguish the ground state from the first excited state, we need to perform phase estimation within error $\frac{\gamma_{s'}}{2\norm{H_{s'}}}$ which can be achieved using $O \lb \log \lb \norm{H_{s'}}/\gamma_{{s'}} \rb \rb \subseteq O \lb \log \lb H_{\max}/\gamma_{\min} \rb \rb$ ancilla qubits and $C_2 \norm{H_{s'}}/\gamma_{s'} \leq C_2 H_{\max}/\gamma_{\min}$ applications of $U_{s'}$ for some constant $C_2$. Then phase estimation returns $\ket{\psi_{s'}}$ with success probability $\xi \geq 4r^2/\pi^2$~\cite{Cleve1998revisited}. Now, suppose we run the phase estimation circuit $k$ times. 
    To ensure that the probability of never observing a single success is upper bounded by $\delta'$, i.e.
    \begin{equation}
        P_{\mathrm{fail}}(k) = \lb 1 - \xi \rb^k \leq \delta',
    \end{equation}
    it suffices to choose $k = \frac{\log (1/\delta')}{\xi}$.
    Thus, the overall cost of preparing $\ket{\psi_{s'}}$ with success probability at least $1 - \delta'$ conditioned upon having prepared both $\ket{\psi_{s'0}}$ and $\ket{\psi_{s'1}}$ successfully is upper bounded by
    \begin{align}
        &k \quad \text{queries to the state preparation routines of $\ket{\psi_{s'0}}$ and $\ket{\psi_{s'1}}$,} \\
        &k \, C_2 \,H_{\max}/\gamma_{\min} \quad \text{applications of $U_{s'}$}.
    \end{align}
    According to the induction hypothesis, we can prepare both $\ket{\psi_{s'0}}$ and $\ket{\psi_{s'1}}$ each with probability at least $1 - \delta'$ using at most
    \begin{equation}
        2 \,N_{V}(n, \delta') \le 2 \, C_1 2^{(p-n)\left( 1 
        + \log (\pi^2/4r^2) + \log \log \lb 1/\delta' \rb + \log (2(p-n)) \right)} 
    \end{equation} 
    queries to all of the $V$ oracles and
    \begin{equation}
        2 \, N_{U_s}(n, \delta') \le 2 \, \frac{C_2 H_{\max}}{\gamma_{\min}} 2^{(p-n) \left( 1 + \log (\pi^2/4r^2) + \log \log \lb 1/\delta' \rb + \log (2(p-n))\right)}
    \end{equation}
    applications of all $U_s$ with $s \in \left\{ \{ 0, 1\}^{m} | m \in \left\{ n, n + 1, \dots, p \right\} \right\}$.
    The overall success probability of preparing $\ket{\psi_{s'}}$ is then lower bounded by $(1 - \delta ')^3$. By Proposition~\ref{prop:probability} it suffices to pick $\delta' = \frac{\delta}{3}$ to ensure that the overall success probability of preparing $\ket{\psi_{s'}}$ is at least $1- \delta$.
    The total number of queries to all $V$ oracles is then upper bounded as follows:
    \begin{equation}
    \begin{split}
        N_V(n-1, \delta) &\leq 2k N_v(n, \delta/3) \\
        &\leq 2k\, C_1 \, 2^{(p-n)\left( 1 
        + \log (\pi^2/4r^2) + \log \log \lb 1/\delta' \rb + \log (2(p-n)) \right)} \\
        &\leq 2 \frac{\log (3/\delta)}{\xi} C_1 2^{(p-n) \lb 1 
        + \log (\pi^2/4r^2) + \log \log \lb 3/\delta \rb + \log \log \lb 2^{2(p-n)} \rb \rb} \\
        &\leq C_1 2^{(p-n+1) \lb 1 + \log (\pi^2/4r^2) \rb}2^{\log \log (3/\delta) + (p-n) \log \log \lb 2^{2(p-n)} \times 3/\delta \rb} \\
        &\leq C_1 2^{(p-n+1) \lb 1 + \log (\pi^2/4r^2) \rb} 2^{\log \log \lb 2^{2(p - n + 1)}/\delta \rb + (p-n) \log \log \lb 2^{2(p - n + 1)}/\delta \rb} \\
        &\leq C_1 2^{(p-n+1) \lb 1 + \log (\pi^2/4r^2)  + \log \log (1/\delta) + \log \log \lb 2^{2(p-n+1)} \rb \rb} \\
        &\leq C_1 2^{(p-n+1) \lb 1 + \log (\pi^2/4r^2)  + \log \log (1/\delta) + \log \lb 2(p-n+1) \rb \rb}.
    \end{split}
    \end{equation}
    By a similar argument, we have that the total number of applications of all $U_s$, now with \\
    $s \in \left\{ \{ 0, 1\}^{m} | m \in \left\{n-1, n, \dots, p \right\} \right\}$, obeys
    \begin{equation}
    \begin{split}
        N_{U_s}(n-1, \delta) &\leq 2k N_{U_s}(n, \delta/3) \\
        &\leq 2k \, \frac{C_2 H_{\max}}{\gamma_{\min}} \, 2^{(p-n)\left( 1 
        + \log (\pi^2/4r^2) + \log \log \lb 1/\delta' \rb + \log (2(p-n)) \right)} \\
        &\leq \frac{C_2 H_{\max}}{\gamma_{\min}} 2^{(p-n+1) \lb 1 + \log (\pi^2/4r^2)  + \log \log (1/\delta) + \log \lb 2(p-n+1) \rb \rb}.
    \end{split}
    \end{equation}
    This demonstrates the induction step and so we have that the result follows from the base case. Specifically,
    \begin{equation}
    \begin{split}
        N_V = N_V(0, \delta) &\leq C_1 2^{p \lb 1 + \log (\pi^2/4r^2)  + \log \log (1/\delta) + \log \lb 2p \rb \rb} \\
        &\leq C_1 N^{1 + \log (\pi^2/4r^2)  + \log \log (1/\delta) + \log \log \lb N^2 \rb},
    \end{split}
    \end{equation}
    and similarly,
    \begin{equation}
    \begin{split}
        N_{U_s} = N_{U_s}(0, \delta) &\leq \frac{C_2 H_{\max}}{\gamma_{\min}} 2^{p \lb 1 + \log (\pi^2/4r^2)  + \log \log (1/\delta) + \log \lb 2p \rb \rb} \\
        &\leq \frac{C_2 H_{\max}}{\gamma_{\min}} N^{1 + \log (\pi^2/4r^2)  + \log \log (1/\delta) + \log \log \lb N^2 \rb}.
    \end{split}
    \end{equation} 
\end{proof}

Note that the above asymptotic bounds depend strongly on the scaling behavior of $r$. If there are no interactions between different subsystems, then $r=1$.
In the corollary below we discuss sufficient conditions for achieving a gate complexity that is quasi-polynomial in $N$.

\begin{corollary}[No Van Vleck Catastrophe]
    Consider the same setting as in Theorem~\ref{thm:divide+conquer} and assume that $H_{\max}/\gamma_{\min} \in O \lb \mathrm{poly} (N) \rb$. Further, assume that each $V_0, V_{1} , \dots V_{N-1}$ and each $U_s$ for all $s \in \left\{ \{ 0, 1\}^j | j \in [p-1] \right\}$ can be implemented with gate complexity in $O \lb \mathrm{poly} (N) \rb$. 
    If we are promised that $1/r^2 \leq N$, then the gate complexity of preparing the ground state $\ket{\psi_*}$ of the entire system with constant success probability $\geq 2/3$ is in
    \begin{equation}
        O \lb N^{\log (N) + \log \log \lb N \rb} \mathrm{poly} (N) \rb.
    \label{no_van_vleck_weak}
    \end{equation}
    If we have the stronger promise that $1/r^2 \in O(1)$, then the number of required gates scales as
    \begin{equation}
        O \lb N^{\log \log \lb N \rb} \mathrm{poly} (N) \rb.
    \label{no_van_vleck_strong}
    \end{equation}
\label{cor:no_van_vleck}
\end{corollary}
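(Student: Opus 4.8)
The plan is to obtain the gate complexity directly from the query complexities already established in Theorem~\ref{thm:divide+conquer} and then simplify the resulting exponent under the corollary's hypotheses. First I would convert queries into gates: the total gate count is the sum of $N_V$ times the gate cost of a single $V$ oracle and $N_{U_s}$ times the gate cost of a single $U_s$. By assumption each $V_i$ and each $U_s$ has gate complexity in $O(\mathrm{poly}(N))$, and $H_{\max}/\gamma_{\min} \in O(\mathrm{poly}(N))$, so both contributions take the form of the corresponding query bound multiplied by a $\mathrm{poly}(N)$ factor. Folding these together, the overall gate complexity is
\[
O\lb N^{1 + \log (\pi^2/4r^2) + \log \log (1/\delta) + \log \log (N^2)} \, \mathrm{poly}(N) \rb .
\]

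Next I would fix the failure probability to the constant $\delta = 1/3$, which gives success probability at least $2/3$ and renders $\log\log(1/\delta) = \log\log 3$ a constant. I would then decompose the exponent using $\log\log(N^2) = 1 + \log\log N$ into three pieces: (i) the purely constant terms $1 + \log(\pi^2/4) + \log\log(1/\delta) + 1$, which contribute a factor $N^{O(1)} = \mathrm{poly}(N)$ and are absorbed into the polynomial factor; (ii) the term $\log\log N$, which contributes the genuinely super-polynomial factor $N^{\log\log N}$; and (iii) the term $\log(1/r^2)$ arising from $\log(\pi^2/4r^2)$, which contributes $N^{\log(1/r^2)}$ and carries the entire dependence on the interaction strength.

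Finally I would treat the two promised regimes for $r$. If $1/r^2 \le N$ then $\log(1/r^2) \le \log N$, so $N^{\log(1/r^2)} \le N^{\log N}$, and combining with piece (ii) yields $O\lb N^{\log N + \log\log N}\,\mathrm{poly}(N)\rb$, establishing Eq.~\eqref{no_van_vleck_weak}. If instead $1/r^2 \in O(1)$, then $\log(1/r^2)$ is a constant, so $N^{\log(1/r^2)} \in O(\mathrm{poly}(N))$ folds into the polynomial factor and leaves $O\lb N^{\log\log N}\,\mathrm{poly}(N)\rb$, establishing Eq.~\eqref{no_van_vleck_strong}.

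I do not expect a deep obstacle here; the statement is essentially a specialization of Theorem~\ref{thm:divide+conquer} and the work is careful asymptotic bookkeeping. The one point that warrants attention is cleanly separating the constant part of the exponent (which becomes $\mathrm{poly}(N)$) from the $\log\log N$ and $\log(1/r^2)$ parts, together with verifying that both $N^{\log N} = 2^{(\log N)^2}$ and $N^{\log\log N} = 2^{\log N \log\log N}$ are quasi-polynomial in $N$, since it is precisely this quasi-polynomial (rather than exponential) scaling that substantiates the claim that the Van Vleck catastrophe is avoided.
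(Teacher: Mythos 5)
Your proposal is correct and follows essentially the same route as the paper's proof: specialize Theorem~\ref{thm:divide+conquer} by fixing $\delta$ to a constant, absorb the $\mathrm{poly}(N)$ per-query gate costs and the $H_{\max}/\gamma_{\min}$ factor, and bound the $r$-dependent exponent by $\log N$ (or a constant) in the two promised regimes. The paper's argument is terser but performs exactly this bookkeeping, so there is nothing further to reconcile.
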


\begin{proof}
    Follows directly from Theorem~\ref{thm:divide+conquer}.  Specifically, if we examine the $r$-dependent term in the expansion in the case where $1/r^2 \le N$, then
    \begin{equation}
        N^{\log(\pi^2/4r^2)} \le N^{\log(N \pi^2/4)} \in O \lb N^{\log(N)} {\rm poly}(N) \rb.
    \end{equation}
    Then, as $\delta \ge 2/3$, the overall gate complexity is given by
    \begin{equation}
        O\lb N^{\log(N) + \log \log(N^2)} {\rm poly}(N) \rb.
    \end{equation}
    We can repeat the same argument for $1/r^2 \in O(1)$.  In this case we obtain
    \begin{equation}
        N^{\log(\pi^2/4r^2)} \in O \lb {\rm poly}(N) \rb,
    \end{equation}
    which then gives us the required result in exactly the same manner as above.
\end{proof}

This shows that if the overlap between the ground state of a given node in the binary tree and the tensor product of the ground states associated with the corresponding child nodes is sufficiently large for every node in the binary tree, then the gate complexity for preparing the ground state of the entire system scales quasi-polynomially with $N$. In the best case scenario where $1/r^2 \in O(1)$ the gate complexity scales as $N^{\log \log \lb N \rb}$, up to polynomial factors. Note that $\log \log \lb N \rb \leq 4$ for $N \leq 10^{23}$. This means that for most practical intents and purposes, the gate complexity can be considered to be polynomial in $N$ if $r$ is lower bounded by a constant. In contrast, the Van Vleck catastrophe would predict a gate complexity which scales exponentially with $N$.

\section{Perturbation Theory} 
\label{sec:perturbed}

In the previous section, we showed that the Van Vleck catastrophe can be circumvented provided that the overlaps between ground states in adjacent layers of the binary tree is sufficiently large. In this section, we use perturbation theory to derive a lower bound $r$ on these overlaps.
To do so we will make use of the following theorem which provides an upper bound on the distance between the eigenvectors of two Hermitian matrices:

\begin{thm}[Simplified Davis-Kahan Theorem~\cite{bhatia2013matrix, DavisKahan2014}]
    Let $A, A' \in \mathbb{C}^{N \times N}$ be Hermitian with eigenvalues $\lambda_0 \leq \lambda_1 \leq  \dots \leq \lambda_{N-1}$ and $\lambda_0' \leq \lambda_1' \leq  \dots \leq \lambda_{N-1}'$, respectively. Furthermore, let $v_0, v_1, \dots, v_{N-1}$ and $v_0', v_1', \dots, v_{N-1}'$ be eigenvectors of $A$ and $A'$ such that $Av_j = \lambda_j v_j$ and $A'v_j' = \lambda_j' v_j'$. Then we have that $\forall j \in \{0,1,\dots, N-1\}$,
    \begin{equation}
        \sqrt{1-|\langle v_j, v_j'\rangle|^2} \leq \frac{\pi}{2} \frac{\norm{A-A'}}{\delta_j},
    \end{equation}
    where $\delta_j := \min \{ |\lambda_j - \lambda_{j-1}'|, |\lambda_j - \lambda_{j+1}'| \}$ with $\lambda_{-1}' := -\infty$ and $\lambda_N' := \infty$.
\label{thm:davis-kahan}
\end{thm}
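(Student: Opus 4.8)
The plan is to prove the bound by the standard ``residual'' (Bauer--Fike style) argument, in which the quantity $\sqrt{1-\abs{\langle v_j, v_j'\rangle}^2}$ is read as the sine of the angle $\theta$ between the unit vectors $v_j$ and $v_j'$. Since $A'$ is Hermitian, its eigenvectors $\{v_k'\}_k$ form an orthonormal basis, so I would first expand $v_j = \sum_k c_k v_k'$ with $c_k = \langle v_k', v_j\rangle$ and $\sum_k \abs{c_k}^2 = 1$. With this expansion $\abs{\langle v_j, v_j'\rangle} = \abs{c_j}$, and therefore
\begin{equation}
    1 - \abs{\langle v_j, v_j'\rangle}^2 = \sum_{k \neq j} \abs{c_k}^2 = \sin^2\theta .
\end{equation}
The whole proof thus reduces to controlling the total weight of $v_j$ on the eigenvectors of $A'$ other than $v_j'$.

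The key step is a residual identity. Because $A v_j = \lambda_j v_j$, we have $(A' - \lambda_j \identity)v_j = (A'-A)v_j$, so that $\norm{(A'-\lambda_j\identity)v_j} = \norm{(A'-A)v_j} \le \norm{A-A'}$. Expanding the left-hand side in the eigenbasis of $A'$ gives
\begin{equation}
    \norm{(A'-\lambda_j\identity)v_j}^2 = \sum_k \abs{c_k}^2 (\lambda_k' - \lambda_j)^2 \ge \sum_{k\neq j} \abs{c_k}^2 (\lambda_k'-\lambda_j)^2 .
\end{equation}
If I can show that $\abs{\lambda_k' - \lambda_j} \ge \delta_j$ for every $k \neq j$, then the right-hand side is at least $\delta_j^2 \sin^2\theta$, and combining with the residual bound yields $\sin\theta \le \norm{A-A'}/\delta_j$, which is even stronger than the claimed inequality since $\pi/2 > 1$.

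The main obstacle, and the only place the structure of $\delta_j$ enters, is establishing the separation $\abs{\lambda_k'-\lambda_j}\ge \delta_j$ for all $k\neq j$. Here I would split into cases according to the ordering of the eigenvalues. In the generic case where $\lambda_j$ is sandwiched, $\lambda_{j-1}' \le \lambda_j \le \lambda_{j+1}'$, the monotonicity $\lambda_k' \le \lambda_{j-1}'$ for $k \le j-1$ and $\lambda_k' \ge \lambda_{j+1}'$ for $k \ge j+1$ immediately gives $\abs{\lambda_k' - \lambda_j} \ge \min\{\lambda_j - \lambda_{j-1}', \lambda_{j+1}' - \lambda_j\} = \delta_j$. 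In the remaining degenerate case, where $\lambda_j$ falls outside $[\lambda_{j-1}', \lambda_{j+1}']$, I would invoke Weyl's inequality $\abs{\lambda_m - \lambda_m'} \le \norm{A-A'}$ together with the monotonicity of the spectra to show $\delta_j \le \norm{A-A'}$; the claimed bound then holds trivially because its right-hand side is at least $1 \ge \sin\theta$. This case analysis is the part that needs the most care, since it is where one must rule out the possibility of an eigenvalue $\lambda_k'$ with $k$ far from $j$ approaching $\lambda_j$ more closely than the neighbors defining $\delta_j$. An alternative route that produces the constant $\pi/2$ genuinely, rather than as a loose consequence of the constant-$1$ bound, is the operator-theoretic Davis--Kahan proof via the Sylvester equation relating the relevant spectral projections; I would fall back on this if a sharper dependence were required, but the residual argument above is the most economical way to obtain the stated inequality.
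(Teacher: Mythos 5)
The paper does not actually prove this theorem: it is imported by citation (Bhatia's book and the Davis--Kahan variant of Yu, Wang and Samworth), so there is no in-paper argument to compare yours against. Judged on its own, your residual argument is correct and complete. The expansion $v_j=\sum_k c_k v_k'$ in an orthonormal eigenbasis of $A'$, the identity $(A'-\lambda_j\identity)v_j=(A'-A)v_j$, and the resulting inequality $\sum_{k\neq j}|c_k|^2(\lambda_k'-\lambda_j)^2\le \norm{A-A'}^2$ are all sound, and your case analysis for the separation $|\lambda_k'-\lambda_j|\ge\delta_j$ is exactly where the care is needed: in the sandwiched case $\lambda_{j-1}'\le\lambda_j\le\lambda_{j+1}'$, monotonicity of the primed spectrum gives the separation for every $k\neq j$, while in the remaining case Weyl's inequality combined with $\lambda_{j-1}\le\lambda_j\le\lambda_{j+1}$ forces $\delta_j\le\norm{A-A'}$, so the claimed bound holds trivially because its right-hand side then exceeds $1$. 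Two small points worth making explicit: the $v_k'$ should be taken orthonormal (Hermiticity of $A'$ permits this, and when $\lambda_j'$ is degenerate with a neighbor the bound is again trivial since $\delta_j\le|\lambda_j-\lambda_j'|\le\norm{A-A'}$ by Weyl), and the statement is vacuous when $\delta_j=0$. Note finally that your argument proves the inequality with constant $1$ in place of $\pi/2$; the $\pi/2$ in the cited version is an artifact of the operator-theoretic $\sin\Theta$ argument for spectral subspaces, so for the single-eigenvector form stated here your elementary route is both simpler and strictly sharper.
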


Note that $\delta_j$ in the Davis-Kahan theorem describes the minimum distance between the unperturbed eigenvalue $\lambda_j$ and the perturbed spectrum $\lambda_0', \dots, \lambda_{N-1}'$ excluding $\lambda_j'$. We can lower bound $\delta_j$ in terms of the gap separating $\lambda_j$ from the unperturbed spectrum by using Weyl's inequality.

\begin{thm}[Weyl's Inequality \cite{Weyl1912perturbation}]
    Let $A, A' \in \mathbb{C}^{N \times N}$ be Hermitian with eigenvalues $\lambda_0 \leq \lambda_1 \leq  \dots \leq \lambda_{N-1}$ and $\lambda_0' \leq \lambda_1' \leq  \dots \leq \lambda_{N-1}'$, respectively. Then for any $j \in  \{0,1,\dots, N-1\}$,
    \begin{equation}
        | \lambda_j - \lambda_j'| \leq \norm{A - A'}.
    \end{equation}
\label{thm:weyl}
\end{thm}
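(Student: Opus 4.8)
The plan is to prove Weyl's inequality using the variational (min-max) characterization of the eigenvalues of a Hermitian matrix, i.e.~the Courant-Fischer theorem. Recall that for a Hermitian $A$ with eigenvalues ordered as $\lambda_0 \le \lambda_1 \le \dots \le \lambda_{N-1}$, one has
\begin{equation}
    \lambda_j = \min_{\substack{V \subseteq \mathbb{C}^N \\ \dim V = j+1}} \ \max_{\substack{x \in V \\ \norm{x} = 1}} \langle x, A x \rangle,
\end{equation}
where the minimum runs over all $(j+1)$-dimensional subspaces $V$. My strategy is to control how this quantity changes when $A$ is replaced by $A'$, and then to symmetrize.

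First I would set $E := A' - A$, so that $A' = A + E$ and $\norm{E} = \norm{A - A'}$. For any unit vector $x$, linearity of the inner product together with the Cauchy-Schwarz bound $|\langle x, E x \rangle| \le \norm{x}\,\norm{E x} \le \norm{E}$ gives the pointwise estimate $\langle x, A' x \rangle = \langle x, A x \rangle + \langle x, E x \rangle \le \langle x, A x \rangle + \norm{E}$, valid uniformly in $x$.

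Next I would substitute this bound into the Courant-Fischer formula for $\lambda_j'$. Since the inequality $\langle x, A' x\rangle \le \langle x, A x\rangle + \norm{E}$ holds for every unit $x$, adding the constant $\norm{E}$ commutes with both the inner maximization over $x \in V$ and the outer minimization over $(j+1)$-dimensional subspaces $V$, so the bound passes through unchanged to yield $\lambda_j' \le \lambda_j + \norm{E}$. Swapping the roles of $A$ and $A'$ (equivalently, replacing $E$ by $-E$) gives the matching direction $\lambda_j \le \lambda_j' + \norm{E}$. Combining the two inequalities produces $|\lambda_j - \lambda_j'| \le \norm{E} = \norm{A - A'}$, as claimed.

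There is no genuinely hard step here: the result is classical and the argument is short once the variational characterization is in hand. The only point demanding care is the bookkeeping of the index convention, namely ensuring that $(j+1)$-dimensional subspaces in the min-max formula correspond to the $j$-th eigenvalue under the zero-based increasing ordering used in the statement. The claim that the uniform-in-$x$ additive bound survives both the $\max$ and the $\min$ is immediate, since adding a constant to an objective shifts both optimizers' values by the same amount.
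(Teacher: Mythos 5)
Your proof is correct. Note that the paper does not actually prove this statement; it is quoted as a classical result with a citation to Weyl's 1912 paper, so there is no in-paper argument to compare against. Your route --- the Courant--Fischer min-max characterization, the uniform pointwise bound $\langle x, A'x\rangle \le \langle x, Ax\rangle + \norm{A-A'}$ for unit $x$, pushing that constant through the $\max$ and the $\min$, and then symmetrizing in $A$ and $A'$ --- is the standard textbook derivation of this perturbation bound, and your handling of the zero-based index convention (the $j$-th eigenvalue corresponding to $(j+1)$-dimensional subspaces) is consistent with the ordering used in the statement. No gaps.
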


Combining the Davis-Kahan theorem and Weyl's inequality thus allows us to derive a lower bound on the overlap between the ground states of an unperturbed Hamiltonian $H_0$ and a perturbed Hamiltonian $H_0 + H_{\mathrm{int}}$ solely in terms of the spectral gap of $H_0$ and the spectral norm of the perturbation, $\norm{H_{\mathrm{int}}}$. However, from an algorithmic perspective, we also need to take approximation errors into account. In general, we will not be able to implement the unitary $\exp \lb -i(H_0 + H_{\mathrm{int}})t \rb$ exactly. This means that quantum phase estimation, with the ground state of $H_0$ as input, will not exactly project onto the ground state of $H_0 + H_{\mathrm{int}}$ but rather some approximate ground state.
In order to quantify this approximation error, we introduce the notion of an effective Hamiltonian.

\begin{defn}[Effective Hamiltonian]
    Let $H \in \mathbb{C}^{N \times N}$ be a Hermitian matrix and let $U := \exp{(-iHt)}$ with $0 < t \leq \frac{1}{4\norm{H}}$ be the associated time evolution operator. Let $\widetilde{U}$ be a unitary approximation to $U$ such that $\norm{\widetilde{U} - U}  \leq \frac{1}{3}$.
    Then we call
    \begin{equation}
        \widetilde{H} := \frac{i}{t} \log \widetilde{U}
    \end{equation}
    the effective Hamiltonian of $\widetilde{U}$ where 
    \begin{equation}
        \log \widetilde{U} := \sum_{k=1}^\infty \frac{(-1)^{k+1}}{k} \lb \widetilde{U} - \mathbb{1} \rb^k.
    \end{equation}
\label{def:effective_ham}
\end{defn}

The choice of $t$ in the above definition ensures that the expression for $\log \widetilde{U}$ converges such that $\exp \lb \log \widetilde{U} \rb = \widetilde{U}$. We show this explicitly in the proof of the following lemma, which bounds the error in the effective Hamiltonian w.r.t.~to the exact Hamiltonian.

\begin{lemma}[Error bound on effective Hamiltonian]
    Let $H$ be a Hamiltonian and let $U:= \exp \lb -iHt \rb$ with $0 < t \leq \frac{1}{4\norm{H}}$ be the associated time evolution operator. Let $\widetilde{U}$ be a unitary approximation to $U$ such that
    \begin{equation}
        \norm{\widetilde{U} - U} \leq \frac{t\epsilon}{9} \leq \frac{1}{3},
    \end{equation}
    for some error tolerance $\epsilon \geq 0$. Then the effective Hamiltonian $\widetilde{H}$ of $\widetilde{U}$ satisfies
    \begin{equation}
        \norm{\widetilde{H} - H} \leq \epsilon.
    \end{equation}
\label{lem:effective_ham}
\end{lemma}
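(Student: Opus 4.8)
The plan is to reduce the operator-norm distance between $\widetilde H$ and $H$ to the logarithm distance $\norm{\log\widetilde U - \log U}$, and then to control the latter by a matrix analogue of the scalar Lipschitz estimate $|\log(1+x) - \log(1+y)| \lesssim |x-y|$ valid inside the convergence disk of the logarithm series.

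First I would verify the two facts that make the series definition meaningful. Since $U = \exp(-iHt)$ is normal with eigenvalues $e^{-i\lambda_j t}$, where the $\lambda_j$ are the eigenvalues of $H$ and $|\lambda_j| \le \norm H$, the elementary bound $|e^{-i\theta}-1| = 2|\sin(\theta/2)| \le |\theta|$ gives $\norm{U - \mathbb{1}} \le t\norm H \le 1/4$. Combined with the hypothesis $\norm{\widetilde U - U} \le 1/3$, the triangle inequality yields $\norm{\widetilde U - \mathbb{1}} \le 7/12 < 1$, so the series $\log\widetilde U = \sum_k \frac{(-1)^{k+1}}{k}(\widetilde U - \mathbb{1})^k$ converges absolutely and satisfies $\exp(\log\widetilde U) = \widetilde U$; this is exactly the convergence claim promised after Definition~\ref{def:effective_ham}. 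The same series applied to $U$ converges as well, and because each eigenvalue $-i\lambda_j t$ of $-iHt$ is purely imaginary with $|\lambda_j t| \le 1/4 < \pi$, the principal branch returns $\log U = -iHt$. Hence $\frac{i}{t}\log U = H$, and therefore $\widetilde H - H = \frac{i}{t}\lb \log\widetilde U - \log U \rb$, so $\norm{\widetilde H - H} = \frac{1}{t}\norm{\log\widetilde U - \log U}$.

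Next I would estimate $\norm{\log\widetilde U - \log U}$ termwise. Writing $A := \widetilde U - \mathbb{1}$ and $B := U - \mathbb{1}$, so that $A - B = \widetilde U - U$, I would invoke the telescoping identity $A^k - B^k = \sum_{j=0}^{k-1} A^j (A-B) B^{k-1-j}$ together with submultiplicativity of the spectral norm to obtain $\norm{A^k - B^k} \le \norm{\widetilde U - U}\, k\,\rho^{k-1}$, where $\rho := \max\{\norm A, \norm B\} \le 7/12$. Summing the logarithm series then gives
$$\norm{\log\widetilde U - \log U} \le \norm{\widetilde U - U}\sum_{k=1}^\infty \rho^{k-1} = \frac{\norm{\widetilde U - U}}{1-\rho} \le \frac{12}{5}\,\norm{\widetilde U - U}.$$

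Finally, dividing by $t$ and inserting the hypothesis $\norm{\widetilde U - U} \le t\epsilon/9$ yields $\norm{\widetilde H - H} \le \frac{1}{t}\cdot\frac{12}{5}\cdot\frac{t\epsilon}{9} = \frac{4\epsilon}{15} \le \epsilon$, as desired; the generous factor of $9$ in the hypothesis is what buys the comfortable slack. The step I expect to be the main obstacle is the careful justification that the series-defined logarithm of $U$ really returns $-iHt$ rather than a shifted branch — this is precisely where the constraint $t \le 1/(4\norm H)$ is used, via the imaginary-part bound $|\lambda_j t| < \pi$ — together with ensuring that both $U$ and $\widetilde U$ lie strictly inside the convergence disk so that the termwise manipulation of the infinite logarithm series (in particular interchanging the sum with the telescoping identity) is legitimate; both requirements reduce to the bound $\rho < 1$ established above.
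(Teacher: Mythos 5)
Your proposal is correct, and it reaches the same Lipschitz-type estimate for the matrix logarithm as the paper, but by a different mechanism. The paper interpolates $M(x) = U + x(\widetilde U - U)$, writes $\widetilde H - H = \frac{i}{t}\int_0^1 \frac{d}{dx}\log M(x)\,dx$, and bounds the derivative of the log series as a double sum $\sum_{a,b}\frac{1}{a+b+1}\norm{M-\mathbb{1}}^{a+b}$, which it then loosely estimates by $\frac{1}{(1-2/3)^2}=9$ — exactly cancelling the factor of $9$ in the hypothesis. You instead difference the two log series directly and telescope $A^k - B^k$, which collapses to the single geometric sum $\sum_k \rho^{k-1} = \frac{1}{1-\rho} \le \frac{12}{5}$ and gives the sharper final bound $\frac{4\epsilon}{15}$; the two combinatorial structures are in fact the same sum, the paper just bounds it more crudely. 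Your route buys two things: a better constant, and an explicit justification that the series logarithm of $U$ really returns $-iHt$ (via the eigenvalue bound $|\lambda_j t|\le 1/4 < \pi$ and the principal branch), a point the paper silently assumes when it asserts $h(0)=H$. The paper's integral formulation buys a cleaner generalization to bounding $\norm{h(x)-h(0)}$ along the whole path, but that generality is not used. Both arguments rest on the same convergence check $\norm{\widetilde U - \mathbb{1}} < 1$ (your $7/12$ versus the paper's $2/3$), and both are complete; there is no gap in your version.
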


\begin{proof}
    For $x \in [0,1]$, define
    \begin{equation}
        M(x) := e^{-iH t} + x \Delta, \qquad \Delta := \widetilde{U} - e^{-iH t}.
    \end{equation}
    Furthermore, let
    \begin{equation}
        h(x) := \frac{i}{t} \log M(x), 
    \end{equation}
    where
    \begin{equation}
        \log M(x) = \sum_{k=1}^\infty \frac{(-1)^{k+1}}{k} \lb M(x) - \mathbb{1} \rb^k.
    \end{equation}
    Note that $h(0) = H$ and $h(1) = \widetilde{H}$.
    The above series converges if $\norm{M(x) - \mathbb{1}} < 1$ in the sense that $e^{\log M(x)} = M(x)$. 
    By assumption, we have that $t \leq \frac{1}{4\norm{H}}$ and $\norm{\Delta} \leq 1/3$. Thus,
    \begin{equation}
    \begin{split}
         \norm{M(x) - \mathbb{1}} = \norm{e^{-iH t} + x \Delta - \mathbb{1}} &\leq \norm{e^{-iH t} - \mathbb{1}} + \norm{\Delta} \\
         &\leq \norm{\sum_{k=0}^\infty \frac{(-iH t)^k}{k!} - \mathbb{1}} + \frac{1}{3} \\
         &\leq \norm{\sum_{k=1}^\infty \frac{(-iH t)^k}{k!}} + \frac{1}{3} \\
         &\leq \sum_{k=1}^\infty \frac{t^k}{k!} \norm{H}^k + \frac{1}{3} \\
         &\leq \sum_{k=1}^{\infty} \frac{(1/4)^k}{k!} + \frac{1}{3} \\
         &\leq \frac{1}{1-1/4} - 1 + \frac{1}{3} = \frac{2}{3} < 1,
    \end{split}
    \end{equation}
    as required for convergence.
    
    Next, we use the fundamental theorem of calculus to bound the error in the effective Hamiltonian $\widetilde{H}$ w.r.t.~$H$:
    \begin{equation}
    \begin{split}
        \norm{\widetilde{H} - H} &= \norm{h(1) - h(0)} \\
        &= \norm{\frac{i}{t}\int_0^1 \frac{d}{dx} \log M(x) dx} \\
        &= \norm{\frac{i}{t}\int_0^1 \lb - \sum_{a,b = 0}^{\infty} \frac{(-1)^{a+b+1}}{a+b+1} \lb M(x) - \mathbb{1} \rb^a \frac{dM}{dx} \lb M(x) - \mathbb{1} \rb^b \rb  dx}.
    \end{split}
    \end{equation}
    Using the triangle inequality, the submultiplicativity of the spectral norm, the fact that $\frac{dM}{dx} = \Delta$ and $\norm{\Delta} \leq \frac{t \epsilon}{9}$, we see that
    \begin{equation}
    \begin{split}
          \norm{\widetilde{H} - H} &\leq \frac{\norm{\Delta}}{t} \sum_{a,b=0}^\infty \frac{1}{a+b+1} \max_{x \in [0,1]}\norm{M(x) - \mathbb{1}}^{a+b} \\
         &\leq \frac{\epsilon}{9} \sum_{a,b=0}^\infty \frac{1}{a+b+1} \lb \frac{2}{3} \rb^{a+b} \\
         &\leq \frac{\epsilon}{9} \sum_{a=0}^{\infty} \frac{\lb \frac{2}{3} \rb^a}{a+1}  \sum_{b=0}^\infty \lb \frac{2}{3} \rb^b \\
         &\leq \frac{\epsilon}{9} \frac{1}{(1- 2/3)^2} \\
         &\leq \epsilon.
    \end{split}
    \end{equation}
\end{proof}

The above results allow us to bound the error in the eigenstates obtained via QPE with an imperfect unitary as discussed below.

\begin{prop}
    Let $H \in \mathbb{C}^{N \times N}$ be Hermitian with eigenvalues $\lambda_0 \leq \lambda_1 \leq  \dots \leq \lambda_{N-1}$ and corresponding eigenstates $\ket{\phi_0}, \ket{\phi_1}, \dots, \ket{\phi_{N-1}}$. Fix some eigenstate $\ket{\phi_j}$ and let $\gamma_j := \min \{ |\lambda_j - \lambda_{j-1}|, |\lambda_j - \lambda_{j+1}| \}$ with $\lambda_{-1} := -\infty$ and $\lambda_N := \infty$ denote the minimum gap between $\lambda_j$ and the remaining eigenvalues of $H$. Let $U:= \exp \lb -iHt \rb$ with $0 < t \leq \frac{1}{4\norm{H}}$ be the time evolution operator associated with $H$ and let $\widetilde{U}$ be a unitary approximation to $U$ such that
    \begin{equation}
        \norm{\widetilde{U} - U} \leq \frac{t\epsilon}{9} \leq \frac{1}{3},
    \end{equation}
    for some error tolerance $0 \leq \epsilon \leq \gamma_j/2$. Furthermore, let $\ket{\widetilde{\phi}_j}$ be the $j$-th eigenstate of the effective Hamiltonian $\widetilde{H}$ of $\widetilde{U}$. Then,
    \begin{equation}
        |\braket{\phi_j}{\widetilde{\phi}_j}|^2 \ge 1- \frac{\pi^2 \epsilon^2}{\gamma_j^2}.
    \end{equation}
\label{prop:approximate_overlap}
\end{prop}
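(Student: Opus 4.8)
The plan is to reduce the statement entirely to the three tools already assembled: the effective-Hamiltonian error bound (Lemma~\ref{lem:effective_ham}), the Davis--Kahan theorem (Theorem~\ref{thm:davis-kahan}), and Weyl's inequality (Theorem~\ref{thm:weyl}). The first thing I would check is that $\widetilde{H} = \frac{i}{t}\log\widetilde{U}$ is genuinely Hermitian, so that the perturbation-theoretic machinery even applies to it. Since $\widetilde{U}$ is unitary and, as shown in the proof of Lemma~\ref{lem:effective_ham}, satisfies $\norm{\widetilde{U}-\identity}\le 2/3 <1$, its principal logarithm (the convergent Mercator series) is anti-Hermitian, and multiplying by $i/t$ with $t$ real produces a Hermitian operator. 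Hence $\widetilde{H}$ has a real, nondecreasing spectrum $\widetilde{\lambda}_0 \le \dots \le \widetilde{\lambda}_{N-1}$ with orthonormal eigenvectors, and $\ket{\widetilde{\phi}_j}$ is well defined.

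Next I would invoke Lemma~\ref{lem:effective_ham} directly. Its hypotheses, namely $0 < t \le \frac{1}{4\norm{H}}$ and $\norm{\widetilde{U}-U} \le \frac{t\epsilon}{9} \le \frac13$, are exactly the assumptions of the proposition, so it immediately yields $\norm{\widetilde{H}-H} \le \epsilon$. This is the crucial conversion: it turns the approximation error on the unitary into a spectral-norm bound on the Hamiltonian perturbation, which is precisely the quantity that both Davis--Kahan and Weyl consume.

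I would then apply Davis--Kahan with $A = H$ and $A' = \widetilde{H}$, identifying $v_j = \ket{\phi_j}$ and $v_j' = \ket{\widetilde{\phi}_j}$, to obtain
\[
    \sqrt{1 - \abs{\braket{\phi_j}{\widetilde{\phi}_j}}^2} \le \frac{\pi}{2}\,\frac{\norm{H-\widetilde{H}}}{\delta_j} \le \frac{\pi}{2}\,\frac{\epsilon}{\delta_j},
    \qquad \delta_j = \min\{\abs{\lambda_j - \widetilde{\lambda}_{j-1}},\, \abs{\lambda_j - \widetilde{\lambda}_{j+1}}\}.
\]
The one step that carries any content is lower-bounding $\delta_j$ by the \emph{unperturbed} gap $\gamma_j$. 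For this I use the reverse triangle inequality together with Weyl's inequality, $\abs{\lambda_{j\pm1} - \widetilde{\lambda}_{j\pm1}} \le \norm{H-\widetilde{H}} \le \epsilon$, giving $\abs{\lambda_j - \widetilde{\lambda}_{j\pm1}} \ge \abs{\lambda_j - \lambda_{j\pm1}} - \epsilon \ge \gamma_j - \epsilon$. The promise $\epsilon \le \gamma_j/2$ then forces $\delta_j \ge \gamma_j/2$, with the $j=0$ and $j=N-1$ edge cases handled automatically by the $\pm\infty$ conventions on $\lambda_{-1},\lambda_N$. Substituting $\delta_j \ge \gamma_j/2$ into the Davis--Kahan bound yields $\sqrt{1 - \abs{\braket{\phi_j}{\widetilde{\phi}_j}}^2} \le \pi\epsilon/\gamma_j$, and squaring and rearranging produces the claimed $\abs{\braket{\phi_j}{\widetilde{\phi}_j}}^2 \ge 1 - \pi^2\epsilon^2/\gamma_j^2$.

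The main subtlety, and essentially the only place where care is genuinely required, is matching the perturbed eigenvalues $\widetilde{\lambda}_{j\pm1}$ appearing in $\delta_j$ to the like-indexed unperturbed eigenvalues $\lambda_{j\pm1}$, since Weyl's inequality only controls differences of eigenvalues sharing the same sorted index. Because both spectra are listed in nondecreasing order the indices align, and the condition $\epsilon\le\gamma_j/2$ ensures the perturbed neighbors cannot drift close enough to $\lambda_j$ to collapse the gap. Everything else is direct substitution into results already proved, so I expect no further obstacles.
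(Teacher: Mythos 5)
Your proposal is correct and follows essentially the same route as the paper's proof: invoke Lemma~\ref{lem:effective_ham} to get $\norm{\widetilde{H}-H}\le\epsilon$, apply the Davis--Kahan theorem, and then lower-bound $\delta_j$ by $\gamma_j/2$ via Weyl's inequality, the reverse triangle inequality, and the promise $\epsilon\le\gamma_j/2$. The only difference is your (welcome but inessential) explicit check that $\widetilde{H}$ is Hermitian, which the paper leaves implicit.
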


\begin{proof}
    From Lemma~\ref{lem:effective_ham} we have that $\norm{\widetilde{H} - H} \leq \epsilon$. By the Davis-Kahan theorem (see Theorem~\ref{thm:davis-kahan}) we then have that
    \begin{equation}
        \sqrt{1-|\braket{\phi_j}{\widetilde{\phi}_j}|^2} \leq \frac{\pi}{2} \frac{\norm{H - \widetilde{H}}}{\delta_j} \leq \frac{\pi}{2} \frac{\epsilon}{\delta_j},
    \end{equation}
    where $\delta_j = \min \{ |\lambda_j - \widetilde{\lambda}_{j-1}|, |\lambda_j - \widetilde{\lambda}_{j+1}| \}$ with $\widetilde{\lambda}_j$ denoting the $j$-th eigenvalue of $\widetilde{H}$. Without loss of generality, assume that $\delta_j = |\lambda_j - \widetilde{\lambda}_{j+1}|$.
    Then we can use Weyl's inequality (see Theorem~\ref{thm:weyl}) to lower bound $\delta_j$ as follows:
    \begin{equation}
    \begin{split}
          \delta_j &= |\lambda_j - \widetilde{\lambda}_{j+1}| = |\lambda_j - \lambda_{j + 1} + \lambda_{j + 1} - \widetilde{\lambda}_{j + 1}| \\
          &\geq \left| |\lambda_j - \lambda_{j+1}| - |\lambda_{j+1} - \widetilde{\lambda}_{j+1}| \right| \\
          &\geq \gamma_j - \epsilon \\
          &\geq \frac{\gamma_j}{2},
    \end{split}
    \end{equation}
    where we used the fact that by assumption, $\epsilon \leq \gamma_j/2$. Putting everything together, we therefore see that
    \begin{equation}
        |\braket{\phi_j}{\widetilde{\phi}_j}|^2 \geq 1 - \frac{\pi^2 \epsilon^2}{\gamma_j^2},
    \end{equation}
    as claimed.
\end{proof}

In the context of our divide-and-conquer strategy for ground state preparation, Proposition~\ref{prop:approximate_overlap} shows that as long as we can implement each $U_s = \exp \lb i\pi H_s/2\norm{H_s} \rb$ within error $\epsilon \in O \lb \gamma_{\min}/H_{\max} \rb$ using only $\mathrm{poly} (N)$ gates, then the overall gate complexity results given in Corollary~\ref{cor:no_van_vleck} remain valid.

Now we are ready to state the main result of this section, which provides a lower bound on the overlap between the eigenstate of some unperturbed Hamiltonian $H_0$ and the corresponding eigenstate of an approximation to the perturbed Hamiltonian $H_0 + H_{\mathrm{int}}$.

\begin{thm}[Lower Bound on Eigenstate Overlap]
    Let $H = H_{0}+ H_{\rm int}$ with $H_0, H_{\rm int} \in \mathbb{C}^{N \times N}$ Hermitian be a Hamiltonian and let $\lambda_0^{(0)} \leq \lambda_1^{(0)} \leq  \dots \leq \lambda_{N-1}^{(0)}$ be the eigenvalues of $H_0$ with associated eigenstates $\ket{\phi_0^{(0)}}, \ket{\phi_1^{(0)}}, \dots, \ket{\phi_{N-1}^{(0)}}$. Further, for any $j \in \{ 0, 1, \dots, N-1\}$, let $\gamma_j^{(0)} := \min \{ |\lambda_j^{(0)} - \lambda_{j-1}^{(0)}|, |\lambda_j^{(0)} - \lambda_{j+1}^{(0)}| \}$ with $\lambda_{-1}^{(0)} := -\infty$ and $\lambda_N^{(0)} := \infty$ denote the minimum gap between $\lambda_j^{(0)}$ and the remaining eigenvalues of $H_0$.
    Assume that we have access to a unitary operation $\widetilde{U}$ such that $\|\widetilde{U}- e^{-iHt}\| \le \frac{t \epsilon}{9}$ with $0 < t \leq \frac{1}{4\norm{H}}$ and $0 \leq \epsilon \leq \norm{H_{\mathrm{int}}}$. Let $\widetilde{\lambda}_0 \leq \widetilde{\lambda}_1 \leq  \dots \leq \widetilde{\lambda}_{N-1}$ be the eigenvalues of the effective Hamiltonian $\widetilde{H}$ of $\widetilde{U}$ according to Definition~\ref{def:effective_ham} with associated eigenstates $\ket{\widetilde{\phi}_0}, \ket{\widetilde{\phi}_1}, \dots, \ket{\widetilde{\phi}_{N-1}}$.
    If $\gamma_j^{(0)} > 2\norm{H_{\mathrm{int}}}$, then
    \begin{equation}
        |\braket{\widetilde{\phi}_j}{\phi_j^{(0)}}|^2 \ge 1- \frac{\pi \|H_{\rm int}\|}{\gamma_j^{(0)} - 2\|H_{\rm int}\|}.
    \end{equation}
\label{thm:perturbed_overlap}
\end{thm}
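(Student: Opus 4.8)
The plan is to treat the effective Hamiltonian $\widetilde{H}$ as a perturbation of the unperturbed Hamiltonian $H_0$ and then reuse the Davis--Kahan-plus-Weyl machinery from the proof of Proposition~\ref{prop:approximate_overlap}; the only conceptual change is that the comparison is now made against $H_0$ rather than against $H$. The first step is to control the total distance $\norm{\widetilde{H} - H_0}$. Under the stated hypotheses, Lemma~\ref{lem:effective_ham} applies and gives $\norm{\widetilde{H} - H} \le \epsilon$, so the triangle inequality yields
\begin{equation}
    \norm{\widetilde{H} - H_0} \le \norm{\widetilde{H} - H} + \norm{H_{\mathrm{int}}} \le \epsilon + \norm{H_{\mathrm{int}}} \le 2\norm{H_{\mathrm{int}}},
\end{equation}
where the final inequality uses $\epsilon \le \norm{H_{\mathrm{int}}}$. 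Thus $\widetilde{H}$ is a perturbation of $H_0$ of effective strength at most $2\norm{H_{\mathrm{int}}}$.

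Next I would apply the Davis--Kahan theorem (Theorem~\ref{thm:davis-kahan}) with $A = H_0$ and $A' = \widetilde{H}$, whose $j$-th eigenpairs are $(\lambda_j^{(0)}, \ket{\phi_j^{(0)}})$ and $(\widetilde{\lambda}_j, \ket{\widetilde{\phi}_j})$. This bounds $\sqrt{1 - |\braket{\phi_j^{(0)}}{\widetilde{\phi}_j}|^2}$ by $\frac{\pi}{2}\norm{H_0 - \widetilde{H}}/\delta_j$ with $\delta_j = \min\lb |\lambda_j^{(0)} - \widetilde{\lambda}_{j-1}|, |\lambda_j^{(0)} - \widetilde{\lambda}_{j+1}| \rb$. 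To rewrite the denominator in terms of the unperturbed gap, I would lower bound $\delta_j$ exactly as in Proposition~\ref{prop:approximate_overlap}: taking without loss of generality the minimum at $\widetilde{\lambda}_{j+1}$, inserting $\pm\lambda_{j+1}^{(0)}$, applying the reverse triangle inequality, and invoking Weyl's inequality (Theorem~\ref{thm:weyl}) as $|\lambda_{j+1}^{(0)} - \widetilde{\lambda}_{j+1}| \le \norm{H_0 - \widetilde{H}} \le 2\norm{H_{\mathrm{int}}}$. This produces $\delta_j \ge \gamma_j^{(0)} - 2\norm{H_{\mathrm{int}}}$, which is strictly positive precisely because of the hypothesis $\gamma_j^{(0)} > 2\norm{H_{\mathrm{int}}}$. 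Substituting both bounds and squaring gives the sharp estimate
\begin{equation}
    1 - |\braket{\widetilde{\phi}_j}{\phi_j^{(0)}}|^2 \le \lb \frac{\pi\norm{H_{\mathrm{int}}}}{\gamma_j^{(0)} - 2\norm{H_{\mathrm{int}}}} \rb^2.
\end{equation}

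The remaining step is to weaken this squared estimate to the linear form in the theorem statement, and I expect this to be the only place demanding a small observation rather than routine bookkeeping. Writing $x := \pi\norm{H_{\mathrm{int}}}/\lb \gamma_j^{(0)} - 2\norm{H_{\mathrm{int}}} \rb \ge 0$, I would argue by cases: when $x \le 1$ we have $x^2 \le x$, hence $1 - x^2 \ge 1 - x$ and the squared bound implies the claimed one; when $x > 1$ the claimed right-hand side $1-x$ is negative while $|\braket{\widetilde{\phi}_j}{\phi_j^{(0)}}|^2 \ge 0$, so the inequality is automatic. The substantive content therefore lives entirely in the distance bound $\norm{\widetilde{H} - H_0} \le 2\norm{H_{\mathrm{int}}}$ and the Weyl-based gap bound, both of which are close adaptations of the corresponding steps for Proposition~\ref{prop:approximate_overlap} with the effective perturbation strength $2\norm{H_{\mathrm{int}}}$ in place of $\epsilon$.
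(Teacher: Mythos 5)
Your proposal is correct and follows essentially the same route as the paper: bound $\norm{\widetilde{H}-H_0}\le \epsilon + \norm{H_{\rm int}} \le 2\norm{H_{\rm int}}$ via Lemma~\ref{lem:effective_ham} and the triangle inequality, apply Davis--Kahan with $A=H_0$, $A'=\widetilde{H}$, and lower bound $\delta_j \ge \gamma_j^{(0)} - 2\norm{H_{\rm int}}$ via Weyl. The only cosmetic difference is the final weakening step, where you square the Davis--Kahan bound and case-split on $x\le 1$ versus $x>1$, while the paper simply uses $1-P \le \sqrt{1-P}$ for $P\in[0,1]$; the two observations are equivalent.
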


\begin{proof}
    From Lemma~\ref{lem:effective_ham} we have that $\norm{\widetilde{H} - H} \leq \epsilon$. It then follows from the Davis-Kahan theorem (see Theorem~\ref{thm:davis-kahan}) and the triangle inequality that
    \begin{equation}
        \sqrt{1 - |\braket{\widetilde{\phi}_j}{\phi_j^{(0)}}|^2} \leq \frac{\pi}{2} \frac{\norm{\widetilde{H} - H_0}}{\delta_j} \leq \frac{\pi}{2} \frac{\norm{H_{\mathrm{int}}} + \epsilon}{\delta_j} \leq \frac{\pi \norm{H_{\mathrm{int}}}}{\delta_j},
    \end{equation}
    where $\delta_j = \min \{ | \lambda_j^{(0)} - \widetilde{\lambda}_{j-1}|, | \lambda_j^{(0)} - \widetilde{\lambda}_{j+1}| \}$ with $\widetilde{\lambda}_{-1} := -\infty$ and $\widetilde{\lambda}_N := \infty$. Using the reverse triangle inequality, we see that
    \begin{equation}
        | \lambda_j^{(0)} - \widetilde{\lambda}_{j\pm 1}| \geq \left| | \lambda_j^{(0)} - \lambda_{j\pm 1}^{(0)}| - | \lambda_{j\pm 1}^{(0)} - \widetilde{\lambda}_{j\pm 1}| \right|.
    \end{equation}
    By definition, the first term on the RHS satisfies $| \lambda_j^{(0)} - \lambda_{j\pm 1}^{(0)}| \geq \gamma_j^{(0)}$. Next, we use Weyl's inequality to conclude that the second term on the RHS is upper bounded as follows:
    \begin{equation}
        | \lambda_{j\pm 1}^{(0)} - \widetilde{\lambda}_{j\pm 1}| \leq \norm{H_0 - \widetilde{H}} \leq \norm{H_{\mathrm{int}}} + \epsilon \leq 2\norm{H_{\mathrm{int}}}.
    \end{equation}
    By assumption, $\gamma_j^{(0)} > 2\norm{H_{\mathrm{int}}}$. Thus,
    \begin{equation}
        | \lambda_j^{(0)} - \widetilde{\lambda}_{j\pm1}| \geq \gamma_j^{(0)} - 2\norm{H_{\mathrm{int}}}.
    \end{equation}
    Putting everything together, we therefore have that
    \begin{equation}
        1 - |\braket{\widetilde{\phi}_j}{\phi_j^{(0)}}|^2 \leq \sqrt{1 - |\braket{\widetilde{\phi}_j}{\phi_j^{(0)}}|^2} \leq \frac{\pi \norm{H_{\mathrm{int}}}}{\gamma_j^{(0)} - 2\norm{H_{\mathrm{int}}}},
    \end{equation}
    which implies that
    \begin{equation}
        |\braket{\widetilde{\phi}_j}{\phi_j^{(0)}}|^2 \geq 1 - \frac{\pi \|H_{\rm int}\|}{\gamma_j^{(0)} - 2\|H_{\rm int}\|}.
    \end{equation}
\end{proof}

In the context of ground state preparation, the above theorem states that
\begin{equation}
    |\braket{\widetilde{\phi}_0}{\phi_0^{(0)}}|^2 \geq 1 - \frac{\pi \|H_{\rm int}\|}{\gamma_0^{(0)} - 2\|H_{\rm int}\|},
\end{equation}
assuming that $\gamma_0^{(0)} > 2\norm{H_{\mathrm{int}}}$, where $\gamma_0^{(0)}$ is the spectral gap of $H_0$.
This shows that, in the event that the gap is large compared to the strength of the interactions that we are considering when combining subsystems, the success probability of phase estimation will be large.  
However, even if the interaction strength between subsystems is weak, challenges can emerge in the limit of large $N$. In particular, for a system with $N$ weakly interacting subsystems with $N$ interaction Hamiltonians $H_{{\rm int},j}$, the spectral gap $\gamma(n)$ after having added the first $n$ interaction terms obeys
\begin{equation}
    \gamma(n) \ge \max\{\gamma_0^{(0)} - 2\sum_{j=0}^{n-1}\|H_{{\rm int},j}\|,0\}.
\label{gap_lower_bound}
\end{equation}
Thus, without further assumptions, if the strength of the interactions are bounded below by a constant, then our lower bound on the gap vanishes as the number of subsystems increases. From Theorem~\ref{thm:perturbed_overlap} we see that our lower bound for the success probability of phase estimation, which depends on the overlap, also goes to~0 since the lower bound on the overlap obeys
\begin{equation}
    r^2 \ge \max \left\{1 - \frac{\pi \max_j\|H_{{\rm int},j}\|}{\max\{\gamma_0^{(0)} - 2\sum_{j=0}^{N-1}\|H_{{\rm int},j}\|,0\}},0 \right\},
\label{overlap_lower_bound}
\end{equation}
which approaches $0$ as $N \rightarrow \infty$ if $\|H_{{\rm int},j}\|$ is bounded below by a constant.

There are a number of important points about such an argument. The first is that if we are in a situation where the collective interaction strength between the systems is so strong that the gap could be closed by the strength of the interactions, then we are in a regime where the intermolecular effects are potentially dominating the chemical properties of the individual systems.  In such situations, it would not be appropriate to model the system as a set of weakly interacting subsystems but rather as a single collective system. 

Secondly, it is important to note that in the above discussion we are concatenating two lower bounds in order to find a lower bound $r$ on the overlap between the ground states of adjacent layers in the binary tree decomposition of the Hamiltonian. However, neither the lower bound on the overlap derived in Theorem~\ref{thm:perturbed_overlap} nor the subsequent lower bound on the spectral gap in Eq.~\eqref{gap_lower_bound} are necessarily tight. This means that the resulting lower bound given in Eq.~\eqref{overlap_lower_bound} can be overly pessimistic.

Furthermore, there is an important feature that the preceding analysis avoids discussing which is the impact of the nature of the state vector on the perturbed state itself.  This shift is not conveniently described by the Davis-Kahan theorem but can be understood in terms of perturbation theory.  Specifically, let $H(\tau) = H_{0} + \tau H_{\rm int}$ for $\tau \in [0,1]$ and define $\ket{\phi_j(\tau)}$ to be the instantaneous eigenvectors at time $\tau$ and similarly define $E_j(\tau)$ to be the corresponding eigenvalues. Then
\begin{equation}
    \partial_\tau \ket{\phi_j(\tau)} = \sum_{k\ne j}\frac{\ketbra{\phi_k(\tau)}{\phi_k(\tau)} H_{\rm int} \ket{\phi_{j}(\tau)} }{E_{j}(\tau)-E_{k}(\tau)}.
\end{equation}
Thus we have from the mean value theorem (under the assumption that the derivative is defined over the range) that for each $\ket{\phi_\ell}=\ket{\phi_\ell(0)}$ there exists a value $\tau_\ell\in [0,1]$ such that
\begin{equation}
    \braket{\phi_\ell}{\phi_j(1)} = \delta_{\ell,j} + \sum_{k\ne j}\frac{\braket{\phi_\ell}{\phi_k(\tau_\ell)}\bra{\phi_k(\tau_\ell)} H_{\rm int} \ket{\phi_{j}(\tau_\ell)} }{E_{j}(\tau_\ell)-E_{k}(\tau_\ell)}.
\end{equation}
Using the Cauchy-Schwarz inequality, we then obtain
\begin{equation}
    |  \braket{\phi_\ell}{\phi_j(1)} - \delta_{\ell,j}|\le \max_{\tau_\ell}\frac{\sqrt{\bra{\phi_j(\tau_\ell)} H_{\rm int}^2 \ket{\phi_j(\tau_\ell)}}}{\min_{k\ne j} |E_{j}(\tau_\ell)-E_{k}(\tau_\ell)|}.
\end{equation}
From this we see that the perturbed overlap is given by the maximum matrix element along the path.  If the interaction Hamiltonian has a small expectation value in the ground state of the non-interacting Hamiltonian and intermediate eigenstates, then the success probability may differ greatly from the previous estimates which make no assumptions about the inner products of the terms.  In Section~\ref{sec:numerics}, we will see that in practice there can be a substantial gap between the worst case bounds and the empirical performance, which suggests that state-dependent effects can be significant in practice.

\subsection{Sufficient Conditions for Quasi-Polynomial Running Time}

In the following, we will discuss some sufficient conditions on the system Hamiltonian that ensure a quasi-polynomial running time for preparing the ground state via the divide-and-conquer strategy discussed in Theorem~\ref{thm:divide+conquer}. Let us again consider a system composed of $N=2^p$ basic subsystems and let the overall Hamiltonian $H$ have a binary tree decomposition as described in Definition~\ref{def:input_ham}. Consider a collection of subsystems labeled by a binary string $s$ of length $0 \leq k \leq p-1$ and let $\ket{\psi_s}$ be the ground state of the associated Hamiltonian $H_s$. Further, let $s0$ and $s1$ be the binary strings of length $k+1$ that label the child nodes of $s$ and let $\ket{\psi_{s0}}$ and $\ket{\psi_{s1}}$ be the corresponding ground states. It then follows from Theorem~\ref{thm:perturbed_overlap} that
\begin{equation}
    \left|\bra{\psi_s} \lb \ket{\psi_{s0}}\ket{\psi_{s1}} \rb \right|^2  \ge 1- \frac{\pi \|A_{s}\|}{\widetilde{\gamma}_{s} - 2 \|A_{s}\|} =1- \frac{\pi/2}{\widetilde{\gamma}_{s}/2\|A_{s}\| - 1},
\end{equation}
where $\widetilde{\gamma}_s$ is the spectral gap of $H_s - A_s = H_{s0} + H_{s1}$. Equivalently, $\widetilde{\gamma}_s = \min \left\{ \gamma_{s0}, \gamma_{s1} \right\}$ where $\gamma_{s0}$ is the spectral gap of $H_{s0}$ and $\gamma_{s1}$ is the spectral gap of $H_{s1}$. 
Now, as long as $\widetilde{\gamma}_s/\norm{A_s} > 1 + \pi/2$ for all labels $s \in \left\{ \{ 0, 1\}^j | j \in \{0, 1, \dots, p-1\} \right\}$ of the binary tree, we obtain asymptotically better scaling than what would be expected from the Van Vleck argument since $r$ is constant in this case. In particular, as shown in Corollary~\ref{cor:no_van_vleck}, the gate complexity in this case scales like $O \lb N^{\log \log \lb N \rb} \mathrm{poly} (N) \rb$ which is quasi-polynomial in the number of basic subsystems.

The following lemma discusses more specific sufficient conditions for ensuring that $r$ is constant in the system size and hence allowing for a quasi-polynomial upper bound on the overall running time.

\begin{lemma}[Sufficient Conditions for Constant Overlap]
    Consider a system composed of $N=2^p$ basic subsystems and let the overall Hamiltonian $H$ have a binary tree decomposition as described in Definition~\ref{def:input_ham}. Let $k \in \{0, 1, \dots, p-1\}$ label any layer of the binary tree and let $A_{k}^{\max} := \max_{s \in \{0,1\}^k} \left\{ \norm{A_s} \right\}$ be an upper bound on the norm of the interaction terms in layer $k$. Further, let $\gamma_{p}^{\min}$ be the minimum spectral gap over all $2^p$ basic subsystem Hamiltonians and let $c > 1 + \pi/2$ be a constant.
    Then for any bitstring label $s$ of the binary tree, the ground state overlap $\left|\bra{\psi_s} \lb \ket{\psi_{s0}}\ket{\psi_{s1}} \rb \right|$ is lower bounded by a positive constant if for all $k$,
    \begin{equation}
        A_{k}^{\max} \leq \frac{1}{2c} \lb \gamma_{p}^{\min} - 2 \sum_{j=k+1}^{p-1} A_{j}^{\max} \rb.
    \end{equation}
    This is satisfied if for all $k$,
    \begin{equation}
        A_{k}^{\max} \leq \frac{\gamma_{p}^{\min}}{4c(p-k)^2}.
    \end{equation}
\end{lemma}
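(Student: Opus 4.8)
The plan is to reduce the constant-overlap claim to a single spectral-gap estimate. The overlap inequality stated immediately above the lemma already gives, via Theorem~\ref{thm:perturbed_overlap},
\[
    \left|\bra{\psi_s}\lb\ket{\psi_{s0}}\ket{\psi_{s1}}\rb\right|^2 \ge 1 - \frac{\pi/2}{\widetilde{\gamma}_s/2\norm{A_s} - 1},
\]
where $\widetilde{\gamma}_s = \min\{\gamma_{s0},\gamma_{s1}\}$ is the gap of $H_{s0}+H_{s1}$. Since the right-hand side is increasing in the ratio $\widetilde{\gamma}_s/\norm{A_s}$, it suffices to show that the first hypothesis forces $\widetilde{\gamma}_s/(2\norm{A_s}) \ge c$ for every node $s$; then the overlap squared is at least $1 - \frac{\pi/2}{c-1}$, a strictly positive constant precisely because $c > 1 + \pi/2$. (This inequality also yields $\widetilde{\gamma}_s > 2\norm{A_s}$, so that Theorem~\ref{thm:perturbed_overlap} indeed applies.)

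The core of the argument, and the step I expect to be the main obstacle, is a clean lower bound on the layer-wise minimum gap. First I would define $\gamma_m^{\min} := \min_{s'\in\{0,1\}^m}\gamma_{s'}$, the smallest spectral gap among all node Hamiltonians at layer $m$, so that $\gamma_p^{\min}$ matches the minimum leaf gap in the statement. The key is the telescoping recursion $\gamma_m^{\min} \ge \gamma_{m+1}^{\min} - 2A_m^{\max}$. To prove it, note that for a layer-$m$ node $s'$ we have $H_{s'} = A_{s'} + H_{s'0} + H_{s'1}$; because $H_{s'0}$ and $H_{s'1}$ act on disjoint subsystems their sum has gap $\min\{\gamma_{s'0},\gamma_{s'1}\} \ge \gamma_{m+1}^{\min}$, and adding the single term $A_{s'}$ shifts each of the two relevant eigenvalues by at most $\norm{A_{s'}} \le A_m^{\max}$ by Weyl's inequality (Theorem~\ref{thm:weyl}), hence reduces the gap by at most $2A_m^{\max}$. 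Unrolling from layer $p$ down to layer $k+1$ gives $\gamma_{k+1}^{\min} \ge \gamma_p^{\min} - 2\sum_{j=k+1}^{p-1} A_j^{\max}$, and then the first hypothesis yields $2c\norm{A_s} \le 2c\,A_k^{\max} \le \gamma_p^{\min} - 2\sum_{j=k+1}^{p-1}A_j^{\max} \le \gamma_{k+1}^{\min} \le \widetilde{\gamma}_s$, i.e.\ $\widetilde{\gamma}_s/(2\norm{A_s}) \ge c$ as needed. The hard part is getting this recursion right: a naive bound that adds \emph{all} interaction descendants of a layer-$(k+1)$ node at once would incur a factor $2^{j-k-1}$ at each layer $j$ and destroy the estimate. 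The trick is to add interaction terms one layer at a time while taking the minimum over nodes, so that only one term $A_{s'}$ enters per node and the errors telescope additively; this is exactly why the stated bound carries $\sum_j A_j^{\max}$ with no branching factors.

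For the second, sharper condition I would simply verify that it implies the first. Writing $n := p-k \ge 1$ and substituting $A_j^{\max} \le \frac{\gamma_p^{\min}}{4c(p-j)^2}$, the reindexing $\ell = p-j$ gives $2\sum_{j=k+1}^{p-1}A_j^{\max} \le \frac{\gamma_p^{\min}}{2c}\sum_{\ell=1}^{n-1}\ell^{-2}$, so the first condition $A_k^{\max} \le \frac{1}{2c}\lb\gamma_p^{\min} - 2\sum_{j=k+1}^{p-1}A_j^{\max}\rb$ reduces to the elementary inequality $\frac{1}{2c}\sum_{\ell=1}^{n-1}\ell^{-2} \le 1 - \frac{1}{2n^2}$. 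This holds for all $n \ge 1$ because $\sum_{\ell=1}^{n-1}\ell^{-2} < \pi^2/6$ together with $c > 1+\pi/2$ forces the left-hand side below $\frac{\pi^2}{6(2+\pi)} \approx 0.32$, while the right-hand side is at least $\tfrac12$. This last step is routine and needs no estimate sharper than bounding the partial Basel sum by $\pi^2/6$.
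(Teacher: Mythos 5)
Your proposal is correct and follows essentially the same route as the paper's proof: a telescoping Weyl-inequality recursion on the layer-wise minimum gaps, $\gamma_{m}^{\min} \ge \gamma_{m+1}^{\min} - 2A_m^{\max}$, unrolled to lower bound $\widetilde{\gamma}_s$ by $\gamma_p^{\min} - 2\sum_{j=k+1}^{p-1}A_j^{\max}$, followed by bounding the partial Basel sum by $\pi^2/6$ to verify the second condition implies the first. The only differences are cosmetic (you keep the slightly tighter $1 - \tfrac{1}{2n^2}$ where the paper rounds down to $\tfrac{2}{3}\gamma_p^{\min}$, and you explicitly note that $c>1$ guarantees the applicability condition $\widetilde{\gamma}_s > 2\norm{A_s}$ of Theorem~\ref{thm:perturbed_overlap}, a point the paper leaves implicit).
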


\begin{proof}
    Let $\gamma_{k}^{\min} := \min_{s \in \{0,1\}^k} \left\{ \gamma_s \right \}$ be a lower bound on the spectral gap of any $H_s$ in layer $k$ and let $\widetilde{\gamma}_{k}^{\min} := \min_{s \in \{0,1\}^k} \left\{ \widetilde{\gamma}_s \right \}$ be a lower bound on the spectral gap of any $H_s - A_s$ in layer $k$. According to these definitions, $\widetilde{\gamma}_{k}^{\min} = \gamma_{k+1}^{\min}$. Then, 
    \begin{equation}
    \begin{split}
         \widetilde{\gamma}_{k}^{\min} = \gamma_{k+1}^{\min} \geq \widetilde{\gamma}_{k+1}^{\min} - 2 A_{k+1}^{\max},
    \end{split}
    \end{equation}
    which follows directly from Weyl's inequality (see Theorem~\ref{thm:weyl}). Unrolling the recurrence relation, we find that
    \begin{equation}
         \widetilde{\gamma}_{k}^{\min} \geq \gamma_{p}^{\min} - 2 \sum_{j=k+1}^{p-1} A_{j}^{\max},
    \end{equation}
    where $\gamma_{p}^{\min}$ denotes the minimum spectral gap over all $2^p$ basic subsystem Hamiltonians (corresponding to the leaf nodes of the binary tree decomposition of $H$). Note that any sum with a lower summation limit that is larger than the upper summation limit is set to $0$.
    The overlap $\left|\bra{\psi_s} \lb \ket{\psi_{s0}}\ket{\psi_{s1}} \rb \right|$ is thus lower bounded by a positive constant if there exists a constant $c > 1 + \frac{\pi}{2}$ such that
    \begin{equation}
    \begin{split}
        \frac{\widetilde{\gamma}_s}{2\norm{A_s}} &\geq \frac{\widetilde{\gamma}_{k}^{\min}}{2A_{k}^{\max}} \\
        &\geq \frac{\gamma_{p}^{\min} - 2 \sum_{j=k+1}^{p-1} A_{j}^{\max}}{2A_{k}^{\max}}. \\
        &\geq c > 1 + \frac{\pi}{2}.
    \end{split}
    \end{equation}
    This implies that it suffices to ensure that
    \begin{equation}
        A_{k}^{\max} \leq \frac{1}{2c} \lb \gamma_{p}^{\min} - 2 \sum_{j=k+1}^{p-1} A_{j}^{\max} \rb,
    \label{bound_max_int}
    \end{equation}
    as claimed.
    
    Suppose now that $A_{k}^{\max} \leq \frac{\gamma_{p}^{\min}}{4c(p-k)^2}$ for all $k \in \{0, 1, \dots, p-1\}$ with $c > 1 + \pi/2$ as before. Then the RHS of Eq.~\eqref{bound_max_int} satisfies
    \begin{equation}
    \begin{split}
          \gamma_{p}^{\min} - 2 \sum_{j=k+1}^{p-1} A_{j}^{\max} &\geq \gamma_{p}^{\min} - 2 \sum_{j=k+1}^{p-1} \frac{\gamma_{p}^{\min}}{4c(p-j)^2} \\
          &\geq \gamma_{p}^{\min} - \frac{\gamma_{p}^{\min}}{2c} \sum_{j=1}^{p-k-1} \frac{1}{j^2} \\
          &\geq \gamma_{p}^{\min} - \frac{\gamma_{p}^{\min}}{2c} \sum_{j=1}^{\infty} \frac{1}{j^2} = \gamma_{p}^{\min} - \frac{\gamma_{p}^{\min} \pi^2}{12c} \\
          &\geq \frac{2\gamma_{p}^{\min}}{3}.
    \end{split}
    \end{equation}
    Therefore,
    \begin{equation}
        A_{k}^{\max} \leq \frac{\gamma_{p}^{\min}}{4c(p-k)^2} \leq \frac{\gamma_{p}^{\min}}{4c} \leq \frac{\gamma_{p}^{\min}}{3c} \leq \frac{1}{2c} \lb \gamma_{p}^{\min} - 2 \sum_{j=k+1}^{p-1} A_{j}^{\max} \rb,
    \end{equation}
    as required.
\end{proof}

A physical Hamiltonian that satisfies the condition $A_{k}^{\max} \leq \frac{\gamma_{p}^{\min}}{4c(p-k)^2}$ might be a system composed of $N~=~2^p$ well-separated molecules which are weakly interacting with each other such that their interactions are dominated by the dipole contribution of the multipole expansion of the Coulomb potential.
If that is the case, then the overlap between the ground states of adjacent layers of the binary tree would be lower bounded by a constant, i.e.~$r \in \Theta(1)$. According to Corollary~\ref{cor:no_van_vleck}, if additionally, $H_{\max}/\gamma_{\min} \in O \lb \mathrm{poly} (N) \rb$ and each $V_0, V_{1} , \dots V_{N-1}$ and each $U_s$ can be implemented with gate complexity in $O \lb \mathrm{poly} (N) \rb$, then the overall gate complexity would scale like $O \lb N^{\log \log \lb N \rb} \mathrm{poly} (N) \rb$ which is quasi-polynomial in the number of molecules.

\subsection{Area Laws vs Volume Laws}

Above we showed that the Van Vleck catastrophe can be averted if the success probability for the ``conquer step'', which involves the projection of the product state of the ground states of the subsystems onto the ground state of the combined system, is sufficiently large. We will see here that the maximum success probability of this step is intimately related to the entanglement structure of the ground state of the interacting Hamiltonian.  To get an idea about this, let us examine the case where the entanglement entropy is maximal across a given bipartition of a system $C = A \cup B$. Specifically, let $\rho$ be the density matrix of the ground state of the entire system $C$ and let $\ket{\psi_A}$ and $\ket{\psi_B}$ be the ground states of subsystems $A$ and $B$, respectively. Additionally, let $\rho_A := \mathrm{Tr}_B \rho$ and $\rho_B := \mathrm{Tr}_A \rho$ denote the reduced density matrices of $\rho$ on subsystems $A$ and $B$. The entanglement entropy associated with the bipartition is then given by
\begin{equation}
    S(\rho_A) = - \mathrm{Tr} \lb \rho_A \log \rho_A \rb = - \mathrm{Tr} \lb \rho_B \log \rho_B \rb = S(\rho_B).
\end{equation}
Now, let $\dim_A$ and $\dim_B$ denote the dimensions of subsystems $A$ and $B$ and assume that $\dim_A \leq \dim_B$. Then the entanglement entropy is maximized if $\rho_A$ is the maximally mixed state, i.e. $\rho_A = \frac{\mathbb{1}}{\dim_A}$.
In this case, the probability of success for our divide-and-conquer strategy is upper bounded as follows:
\begin{equation}
    P_{\textrm{succ}}={\rm Tr}(\rho \ketbra{\psi_A}{\psi_A} \otimes \ketbra{\psi_B}{\psi_B}) \le \bra{\psi_A}\rho_A\ket{\psi_A} = \frac{1}{\dim_A},
\end{equation}
This shows that if we have too much entanglement entropy, then the divide-and-conquer strategy will fail.  On the other hand, if $\rho$ is a product state then the success probability $P_{\textrm{succ}}$ could be as large as $1$ or as small as $0$. The Davis-Kahan theorem shows that, in the case where the spectral gap is large enough relative to the strength of the interaction Hamiltonian, we expect the overlap to be close to $1$; however, here we wish to examine the question of when the qualitative scaling of the entanglement forbids the divide and conquer algorithm from succeeding and we will phrase this characterization in terms of entropy area and volume laws. 

Volume and area laws for entanglement provide an important characterization of the structure of Hamiltonians.  At a high level, an area law for entanglement exists if for all partitions of the space $C$ into disjoint subsystems $A,B$ such that $C = A \cup B$, the entanglement of the bipartition scales with the area of the boundary that separates systems $A$ and $B$. In particular, a pure state $\rho$ with reduced density matrix $\rho_A$ is said to have an area or volume law for entanglement if~\cite{hastings2007area}
\begin{align}
    \text{Area Law for Entanglement}~&\Rightarrow S(\rho_A) \in O(|{\rm bdy}(A)|)\nonumber\\
    \text{Volume Law for Entanglement}~&\Rightarrow S(\rho_A) \in O(|A|), \nonumber
\end{align}
where $|A|$ is the number of qubits (or more generally qudits of finite dimension) within subsystem $A$ and $|{\rm bdy}(A)|$ is the number of qubits at the boundary separating $A$ from $B$. Note that because of the symmetry of the entanglement entropy, the above definition holds without modification if the roles of $A$ and $B$ are switched.
Volume laws tend to be ubiquitous for eigenstates of random Hamiltonians~\cite{vidmar2017entanglement}, whereas area laws are much rarer. They do occur in general for one-dimensional lattice Hamiltonians~\cite{hastings2007area}, but the conditions required for higher-dimensional area laws remain an active area of research~\cite{eisert2008area}.

Now we wish to turn our attention to the question of what the maximum success probability is that could be sustained for a fixed amount of entanglement.  This is important because if we have a system with a volume law of entanglement then the resulting reduced density matrices may be too mixed in order to support a high probability of success. Specifically, let us consider the partial success probability ${\rm Tr}(\rho (\ketbra{\psi_A}{\psi_A} \otimes \openone))$ and assume it is equal to $r_A^2 \in [0,1]$, i.e.,
\begin{equation}
    {\rm Tr}(\rho  (\ketbra{\psi_A}{\psi_A} \otimes \openone)) = {\rm Tr}(\rho_A \ketbra{\psi_A}{\psi_A})=r^2_A.
\end{equation}
Note that $r_A^2$ is an upper bound on the overall success probability $P_{\textrm{succ}}$.
It is straightforward to verify that the following reduced density matrix maximizes the von Neumann entropy for fixed $r_A^2$:
\begin{equation}
    \rho_A = r_A^2 \ketbra{\psi_A}{\psi_A} + \frac{1-r_A^2}{\dim_A-1} (\openone - \ketbra{\psi_A}{\psi_A}).
\end{equation}
The von Neumann entropy of this state is then
\begin{equation}
    S(\rho_A) = -r_A^2 \log(r_A^2) - (1-r_A^2) \log\left(\frac{1-r_A^2}{\dim_A -1} \right) \geq - (1-r_A^2) \log\left(\frac{1-r_A^2}{\dim_A -1} \right).
\end{equation}
Next, let us argue about the achievability of a given level of overlap for a fixed level of entanglement entropy. If we wish to hit a fixed value of entanglement entropy, $E$, for the pure state $\rho$ then
\begin{equation}
    E \ge - (1-r_A^2) \log\left(\frac{1-r_A^2}{\dim_A -1} \right). 
\end{equation}
As the entanglement entropy is a decreasing function of $r_A^2$, for $r_A^2 \ge 1/\dim_A$, we have that a value of $r_A^2$ always exists such that the above inequality is satisfied.  However, the largest value of $r_A^2$ that can be attained can be found by solving the expression for $r_A^2$ which yields
\begin{equation}
    r_A^2 = 1 + \frac{E}{W(-E/(\dim_A-1))} = 1 - \frac{E}{\log(\dim_A/E) -\log\log(\dim_A/E)+o(1)},
\end{equation}
where $W$ is the Lambert-W function.
Thus, 
\begin{equation}
    |1-r_A^2| \in {\Omega}\left(\frac{E}{\log(\dim_A/E)} \right).
\end{equation}

Now let us examine what this means in terms of area vs volume law scaling.  If we assume that we have a lattice Hamiltonian then the maximum entanglement across any bipartition satisfies an area law in $d$ spatial dimensions with $E \in O(\log^{1-1/d}(\dim_A))$.  In this case we have that
\begin{equation}
    |1-r_A^2| \in {\Omega}\left(\frac{\log^{1-1/d}(\dim_A)}{\log(\dim_A)} \right),
\end{equation}
which vanishes as the dimension of the Hilbert space tends to infinity.  This means that an area law for entanglement does not necessarily preclude a high probability of success in our divide and conquer algorithm and considerations of the eigenvalue gaps will be needed to assess whether the algorithm will be capable of succeeding with high probability.

If we assume, on the other hand, that there is a volume law scaling for the entanglement entropy then  $S(\rho_A)= (1-\eta)\log(\dim_A)$ for constant $0\le \eta\le 1$. Hence choosing $E$ to saturate this value yields,
\begin{align}
    r_A^2 &= 1-\frac{(1-\eta)\log(\dim_A)}{\log(\dim_A) - \log(1-\eta) - O\left(\log(\log(\dim_A)/(1-\eta)\log(\dim_A))\right) }\nonumber\\
    &= 1 -(1-\eta)\left(1+\frac{\log(1-\eta)}{\log(\dim_A)} + O\left(\log\log\left(\frac{1-\eta}{\log(\dim_A)}\right)\right) \right)\nonumber\\
    &=\eta - \frac{(1-\eta)\log(1-\eta)}{\log(\dim_A)} + O\left(\frac{(1-\eta)\log^2(1-\eta)}{\log^2(\dim_A)} \right).
\end{align}
A consequence of this is that the question of whether high success is even possible for volume law scalings can actually be answered in cases where $\eta\in o(1)$.  In such cases, we see that $r_A^2 \in o(1)$ as well which suggests that high success probability is impossible for such volume law scalings.  This is indeed the case for random Hamiltonians drawn, for example, from the Gaussian Unitary Ensemble~\cite{vidmar2017entanglement} where $E = \log(\dim_A)+ O(\dim_A^2/2^n)$ where $2^n$ is the Hilbert space dimension that operators in the larger system $C$ act on.  Thus in this case, large success probability using a divide and conquer scheme is provably impossible.  However, if a more modest volume law scaling with $\eta \in \Theta(1)$ is achievable, then even a volume law scaling cannot be specifically excluded from the conditions required by Corollary~\ref{cor:no_van_vleck} to ensure quasi-polynomial scaling of the divide and conquer state preparation algorithm. However, just as argued in the area law case, many of these cases may further be unachievable after considering the spectral gap and the strength of the perturbing Hamiltonian.

\section{Numerics}
\label{sec:numerics}

The usefulness of Theorem~\ref{thm:divide+conquer} hinges on a specific assumption: the system of interest must ensure a finite overlap between the ground states of adjacent layers in the binary tree when increasing the system size. Although this assumption does not hold in general, it is possible to satisfy it in certain cases. To demonstrate this numerically, we consider the 1D Transverse Field Ising Model which plays an important role in studying, e.g., phase transitions~\cite{pfeuty1970one}. Its simple structure makes it an ideal candidate for our method. The Hamiltonian of the 1D Transverse Ising Model without periodic boundary conditions is defined as
\begin{equation}
\label{eq:heisenberg}
    H_\mathrm{TFIM} := H_0 + H_\mathrm{int} = h \sum_{i=0}^{N-1} \sigma_{i}^{z} + J \sum_{i=0}^{N-2} \sigma_{i}^x \sigma_{i+1}^x \,,
\end{equation}
where $\sigma_{i}^{x}$ ($\sigma_i^z$) denotes the $X$-Pauli ($Z$-Pauli) matrix on spin $i$, $h$ is the on-site interaction and $J$ the coupling constant between two spins. Note that the ground state of the transverse Ising model can be found analytically by transforming to free fermions; however, for our purposes this is not problematic as it allows us to easily validate the numerical results that are returned by our benchmarks.

In this numerical study, we focus on Ising chains with $2^p$ spins and the on-site interaction equaling the coupling constant, $h=J=1$. We construct the binary tree by starting at the lowest level, $j=p$, with individual spins, each defined by the on-site Hamiltonian $H_0$. The subsequent level, $j=p-1$, includes two spins and the interaction term, $H_\mathrm{int}$, between them. We iteratively continue this process to build the complete binary tree up to the root node with $j=0$.

Although solving a single 1D Transverse Field Ising Model (TFIM) analytically is straightforward, the combination of two 1D TFIMs of two nodes in the binary tree introduces two distinct sets of fermionic creation and annihilation operators, complicating the computation of their overlap. We therefore employ tensor networks and DMRG computations readily implemented in the software library ITensor~\cite{itensor} to determine the ground state of the entire system (at $j=0$) and the ground states at various levels of the binary tree. To ensure the convergence of ground state energies, we verify the energy convergence with respect to the bond dimension and set a maximal allowed bond dimension of $M=20000$ for all calculations. In Fig.~\ref{fig:heisenberg} (left), we illustrate the overlap between the ground states of two levels in the binary tree, $\left| \bra{\psi_{s}}(\ket{\psi_{s0}}\ket{\psi_{s1}}) \right|$, as well as the overlap between the ground state of the full system and the ground state at the lowest level of the tree. As can be seen, a naive preparation of the single ground state of each system (a single spin) would inevitably lead to the undesirable Van Vleck catastrophe. In contrast, our tree-based method offers a more efficient solution given that the overlap between any two levels remains effectively constant, thereby circumventing the Van Vleck catastrophe. In Fig.~\ref{fig:heisenberg} (right), we furthermore show the energy gap between the ground state and the first excited state at the various levels of the binary tree, as well as the norm of the interaction between the two child nodes of a given node in the binary tree. These numerical results show that our analytical lower bound provided in Theorem~\ref{thm:perturbed_overlap} can be loose in practice and that it may be possible to apply our method more broadly than our analytical bound suggests.  
\begin{figure}
    \centering
    \begin{minipage}{0.499\textwidth}
        \centering
        \includegraphics[width=0.9\textwidth]{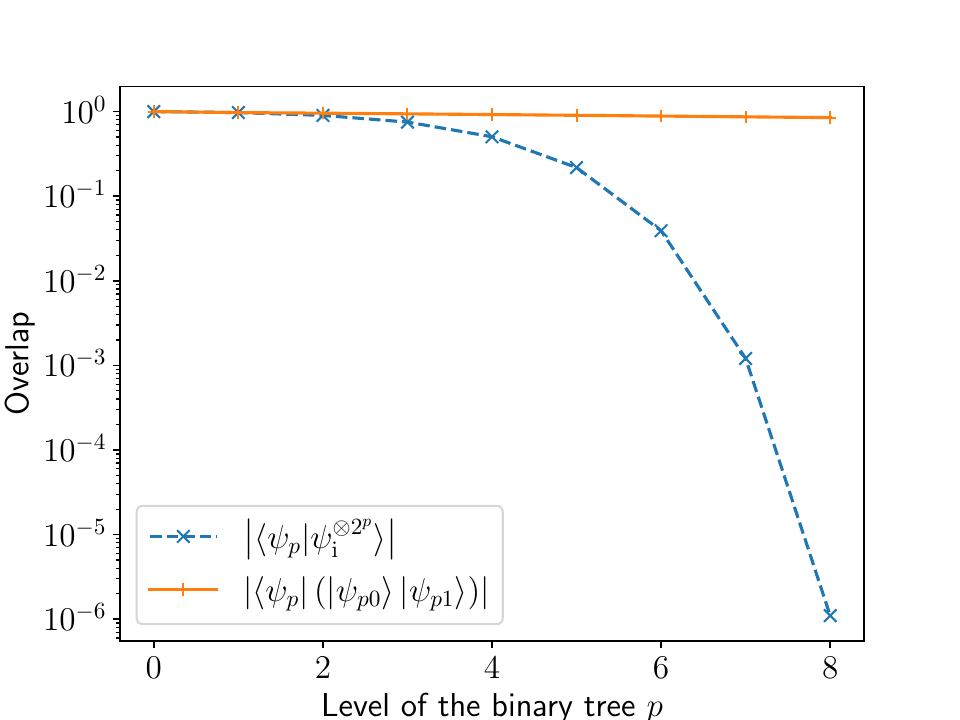} 
    \end{minipage}\hfill
    \begin{minipage}{0.499\textwidth}
        \centering
        \includegraphics[width=0.9\textwidth]{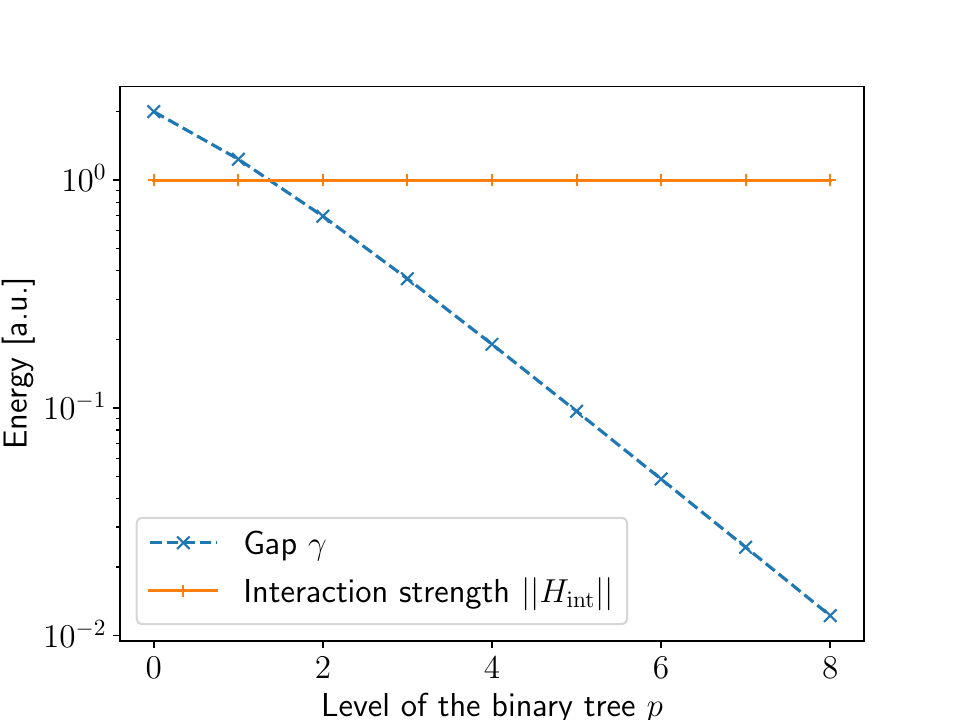} 
    \end{minipage}
    \caption{(left) Comparison of the ground state overlaps for a 1D Transverse Ising Hamiltonian of size $2^p$, see Eq.~(\ref{eq:heisenberg}). The dashed blue line represents the overlap between the Hamiltonian's ground state and the tensor product of $2^p$ individual single-spin ground states, while the solid orange line illustrates the overlap with the tensor product of the two ground states at size $2^{p-1}$. (right) The spectral gap $\gamma$ and the interaction strength $||H_\mathrm{int}||$ for a 1D TFIM model with $2^p$ spins.
    }
    \label{fig:heisenberg}
\end{figure}

It is important to mention that a different choice of the initial state could also mitigate the Van Vleck catastrophe. However, this example underscores that a naive initial state can lead to an exponentially vanishing overlap with the target ground state, and hence an exponential cost for projector-based ground state preparation methods~\cite{Ge2019, Lin2020}, whereas our approach enables the preparation of the target ground state in quasi-polynomial time even with a naive initial state.

\section{Divide and Conquer State Preparation for Fermions }
\label{sec:fermion}

In the previous sections, we discussed the Van Vleck catastrophe and its implications for general quantum systems, supplemented by a numerical example involving a 1D TFIM. 
A major area of interest of quantum computing, however, lies in its application to solving quantum chemistry problems, specifically the computation of ground state energies of interacting fermions. This introduces some subtle challenges, particularly due to the delicate nature of working with fermions. 

As a consequence of the Pauli exclusion principle, fermionic wave functions must be antisymmetric with respect to the exchange of any two electrons. This requirement complicates the application of our divide-and-conquer strategies to a collection of, e.g., $N$ molecules. Initially, one might consider treating each of the $N$ molecules as a distinct subsystem and follow our procedure outlined in the previous sections. However, this approach encounters a significant hurdle: while the ground states of individual subsystems are antisymmetrized internally, their combination does not maintain this property. More critically, the tensor product of two fermionic Hilbert spaces does not yield another fermionic space. When representing the subsystems in first quantization, this issue can theoretically be mitigated by applying a projector onto the antisymmetric subspace at each node, albeit with a success probability that vanishes exponentially with the system size. As we demonstrate, a more viable strategy involves adopting a second quantized representation.

Rather than starting with segmenting the system based on groups of molecules, we first construct a proper basis for the entire system by defining creation and annihilation operators with the following anti-commutation relations:
\begin{align}
    \{a_p, a^\dagger_q\} &= a_p a^\dagger_q +a^\dagger_q a_p = \delta_{pq}, \\
    \{a^\dagger_p, a^\dagger_q \} &= \{a_p, a_q\} = 0.
\end{align}
In quantum chemistry, this can be achieved by performing a Hartree-Fock calculation or building orthogonal orbitals from atomic orbitals. After this procedure, we find the fermionic Hamiltonian in second quantization of the entire system: 
\begin{equation} 
    H = \sum_{pq} T_{pq} a^\dagger_p a_q + \sum_{pqrs} V_{pqrs} a^\dagger_p a^\dagger_q a_r a_s\,,
\end{equation}
where $T_{pq}$ and $V_{pqrs}$ denote the one- and two-body coefficients. As a single orbital is not a function of a single molecule but of the entire system, this formulation does not allow for a division of orbitals that aligns with the molecular segmentation into $N=2^p$ molecules. We note, however, that employing localized orbitals as a workaround enables a closer alignment of these two perspectives. Consequently, our divide-and-conquer technique is applied to a partition of the $2^p$ orbitals, which describe the Hamiltonian, rather than the molecules themselves. 

Simulating the fermionic Hamiltonian on a quantum computer requires the translation of fermionic operators into qubit operators, which can be accomplished through the Jordan-Wigner transformation, expressed as follows: 

\begin{align}
    a^\dagger_p &\rightarrow \lb \prod_{j=0}^{p-1} \bigotimes Z_j \rb \otimes \frac{1}{2} \lb X_p + iY_p \rb \\
    a_p &\rightarrow \lb \prod_{j=0}^{p-1} \bigotimes Z_j \rb \otimes \frac{1}{2} \lb X_p - iY_p \rb.
\end{align}

We highlight that this transformation retains locality in the following sense. If $p < q$ we have that
\begin{equation}
    a^\dagger_p a_q = \frac{1}{4} \lb X_p + iY_p \rb Z_p \otimes \lb \prod_{j=p+1}^{q-1} \bigotimes Z_j \rb \otimes \lb X_q - iY_q \rb,
\end{equation}
which acts nontrivially only on orbitals within the index range $[p, q]$.
Similarly, if $p > q$, then
\begin{equation}
    a^\dagger_p a_q = \frac{1}{4} Z_q\lb X_q + iY_q \rb \otimes \lb \prod_{j=q+1}^{p-1} \bigotimes Z_j \rb \otimes \lb X_p - iY_p \rb,
\end{equation}
which acts nontrivially only on orbitals within the index range $[q, p]$.
Lastly, if $p=q$ we have
\begin{equation}
    a^\dagger_p a_p = \frac{1}{2} \lb \mathbb{1}_p + Z_p\rb.
\end{equation}
This shows that the one-body operators exhibit a certain type of locality. The same is actually true for the two-body terms $a^\dagger_p a^\dagger_q a_r a_s$. Specifically, let $i_{\min} := \min \{ p, q, r,s \}$ and $i_{\max} := \max \{ p, q, r,s \}$. Then $a^\dagger_p a^\dagger_q a_r a_s$ acts nontrivially only on orbitals within the index range $[i_{\min}, i_{\max}]$.
This localization allows for a strategy to label the orbitals, ensuring that qubit operators remain local within their respective nodes in the binary tree structure. By organizing the binary tree such that the Hamiltonian of each subsystem acts exclusively within its node, we maintain the product state structure essential for Theorem~\ref{thm:divide+conquer}. Therefore, our divide-and-conquer strategy can be adapted to fermionic systems, provided that the subdivision of subsystems is executed based on the set of orbitals rather than on the physical systems themselves.

\section{Conclusion}
\label{sec:conclusion}

Finding accurate approximations to the ground state of quantum systems is crucial for quantum computing applications in quantum chemistry and beyond~\cite{reiher2017elucidating,von2021quantum, lee2021even, Lee2023, Mitarai2022, Simon2024liouvillian}. Examples include drug design~\cite{caesura2025faster, Santagati2024, Cortes2024_SAPT} and battery optimization~\cite{kim2022fault}. 
Most quantum algorithms for ground state preparation require an easily preparable initial state that has a large overlap with the target ground state in order to run efficiently~\cite{Ge2019, Lin2020, Tubman2018}.
However, the Van Vleck catastrophe suggests that a suitable initial state, which has large overlap with the target ground state state, turns into a bad initial state exponentially fast in the sense that the overlap with the target ground state decreases exponentially with the system size, even for systems composed of separable subsystems~\cite{Vanvleck1936, Kohn1999}.

In this work, we propose a divide-and-conquer strategy for ground state preparation on quantum computers that can circumvent the exponential cost implied by the Van Vleck catastrophe. The main idea is to divide the system of interest into smaller subsystems and then use phase estimation repeatedly to project onto the ground state of larger and larger sets of subsystems until the whole system is rebuilt.
We show that, under certain assumptions on the Hamiltonian, the number of gates required for preparing the ground state of $N$ (weakly) interacting systems scales only like $O \lb N^{\log \log \lb N \rb} \mathrm{poly} (N) \rb$, which is quasi-polynomial in the number of systems (see Theorem~\ref{thm:divide+conquer} and Corollary~\ref{cor:no_van_vleck} for details). It might be possible to achieve a polynomial upper bound by conducting a tighter or more specialized analysis of our approach.

A key assumption in our divide-and-conquer strategy is that when two interacting subsystems are combined, the overlap between the product state of the two non-interacting ground states, $\ket{\psi_{s0}}$ and $\ket{\psi_{s1}}$, and the interacting ground state, $\ket{\psi_{s}}$, remains above a positive threshold $r$, i.e.  $\left| \bra{\psi_{s}}(\ket{\psi_{s0}}\ket{\psi_{s1}}) \right| \geq r$. To explore how the parameter $r$ relates to the properties of the Hamiltonian, we use perturbation theory to derive a bound on the achievable overlap (see Theorem~\ref{thm:perturbed_overlap}). This bound implies that our method works well when the spectral norm of the interaction Hamiltonian between the systems that are being combined is considerably smaller than the eigenvalue gap of the individual subsystem Hamiltonians.

The success of our divide-and-conquer protocol can also be related to the entanglement structure of the ground states that we are trying to construct.  In particular, we show that if the system exhibits a volume law of entanglement with respect to the subsystems being merged via the divide-and-conquer algorithm, then the divide-and-conquer algorithm will fail to produce the target ground state.  In essence, this tells us that if the ground state is strongly correlated across the partitions that we aim to glue together, then the entire system should be treated as a single large system or molecule. Trying to construct it from smaller components in this case is likely folly.  In contrast, if we have an area law scaling then the success probability can be large, but further assumptions (similar to those in Theorem~\ref{thm:divide+conquer}) need to be imposed to ensure that the success probability is indeed large.

We further complement our results with a numerical analysis of a one-dimensional transverse-field Ising model, for which our divide-and-conquer approach maintains a finite overlap at each recursive step, while the overlap of the initial state with the overall ground state vanishes exponentially with the system size. We find that although the conditions on the relationship between the spectral norm of the interaction Hamiltonian and the non-interacting spectral gap in Theorem~\ref{thm:perturbed_overlap} are not satisfied, the overlap at each recursive step does not necessarily vanish. This suggests the possibility of less restrictive sufficient conditions for when the divide-and-conquer approach will vanquish the Van Vleck catastrophe.

While our work illustrates that orthogonality catastrophes, such as the Van Vleck catastrophe, are not necessarily as damaging as they may seem, it does not imply that the divide-and-conquer strategy (or any other strategy) can fully address the ground state preparation problem.  This is a direct consequence of the fact that finding the ground state of $2$-local Hamiltonians is \QMA-hard~\cite{kempe2006complexity} and hence is impossible to do efficiently on a quantum computer unless $\BQP = \QMA$.  This is a serious concern facing the viability of probing ground state properties of physical systems.  As a community we need better techniques for understanding the cost of preparing approximate ground states for chemical systems of practical interest, especially for systems with strong correlations~\cite{reiher2017elucidating,Lee2023,goings2022reliably}. One task that needs to be addressed to achieve this involves identifying whether physically realistic molecules have ground states that are computationally difficult to prepare. Larger scale numerical studies may be needed to provide insight into this from a chemical perspective. More broadly, however, there is likely a need in the community as a whole to focus less on problems involving ground states of, e.g., molecules and focus more on excited states or dynamics~\cite{childs2014bose} as these problems are much more likely to be ones where quantum algorithms have a genuine advantage over classical algorithms.

\begin{acknowledgments}
    SS and NW were supported by the DOE, Office of Science, National Quantum Information Science Research Centers, Co-design Center for Quantum Advantage (C2QA) under Contract No. DE SC0012704 (Basic Energy Sciences, PNNL FWP 76274) and acknowledge funding from Boehringer Ingelheim. SS further acknowledges support from an Ontario Graduate Scholarship. We thank Rob Parrish and Clemens Utschig-Utschig for helpful discussions.
\end{acknowledgments}

\bibliography{refs}

\end{document}